\pdfoutput=1

\documentclass[conference]{IEEEtran}
\ifCLASSINFOpdf
\else
\fi

\usepackage[utf8]{inputenc}
\usepackage{amsmath}
\usepackage{amsfonts}

\usepackage{amsthm}

\newenvironment{definition}[1][Definition]{\begin{trivlist}
\item[\hskip \labelsep {\bfseries #1}]}{\end{trivlist}}



\pretolerance=2000
\tolerance=2000 
\usepackage[table,xcdraw]{xcolor}

\usepackage{graphicx}
\graphicspath{{Figures/}}
\usepackage{epstopdf}

\usepackage{etex}

\usepackage{epsfig}
\usepackage{psfrag}
\usepackage{subfigure}
\usepackage{url}
\usepackage{comment}
\usepackage{afterpage}
\usepackage{cite}
\usepackage{pstricks}
\usepackage{graphics}
\usepackage{soul} 
\usepackage{multirow}

\usepackage{todonotes} 

\usepackage{graphicx, array, blindtext} 

\usepackage{pifont}

\usepackage{lipsum}

\usepackage{graphicx}
\usepackage[font=small,labelfont=bf]{caption}
\usepackage[ruled]{algorithm2e}

\SetAlFnt{\small}
\SetAlCapFnt{\small}
\SetAlCapNameFnt{\small}
\SetAlCapHSkip{0pt}
\IncMargin{-\parindent}

\usepackage{upgreek}
\usepackage{wrapfig}

\usepackage{fixfoot}

\usepackage{listings}
\usepackage{psfrag,wrapfig}
\usepackage{xspace}

\usepackage{enumerate}
\usepackage{paralist} 
\usepackage{subfiles} 

\usepackage{mathtools}
\DeclarePairedDelimiter\ceil{\lceil}{\rceil}
\DeclarePairedDelimiter\floor{\lfloor}{\rfloor}

\newcommand{\eg}{{\it e.g.,}\xspace}

\newcommand{\ie}{{\it i.e.,}\xspace}

\newcommand{\ci}{{\it (i) }}
\newcommand{\cii}{{\it (ii) }}

\newtheorem{theorem}{Theorem}

\newtheorem{remark}{Remark}

\usepackage[nodisplayskipstretch]{setspace}
\linespread{0.892} 



\usepackage{simplemargins}
\setleftmargin{0.6in}
\setrightmargin{0.6in}
\settopmargin{.6in}
\setbottommargin{0.5in}

\hyphenation{op-tical net-works semi-conduc-tor}

\begin{document}
%
\title{
Restart-Based Fault-Tolerance:\\ System Design and Schedulability Analysis}

\author{\IEEEauthorblockN{Fardin Abdi, Renato Mancuso, Rohan Tabish, Marco Caccamo}
\IEEEauthorblockA{Department of Computer Science, 
	University of Illinois at Urbana-Champaign, USA\\
\{abditag2, rmancus2, rtabish, mcaccamo\}@illinois.edu}
}


%


\maketitle
\bstctlcite{IEEEexample:BSTcontrol}

\thispagestyle{plain}
\pagestyle{plain}

\begin{abstract}
	
Embedded systems in safety-critical environments are continuously required to deliver more performance and functionality, while expected to provide verified safety guarantees.
Nonetheless, platform-wide software verification~(required for safety) is often expensive. Therefore, design methods that enable utilization of components such as real-time operating systems~(RTOS), without requiring their correctness to guarantee safety, is necessary. 

In this paper, we propose a design approach to deploy safe-by-design embedded systems. To attain this goal, we rely on a small core of verified software to handle faults in applications and RTOS and recover from them while ensuring that timing constraints of safety-critical tasks are always satisfied. Faults are detected by monitoring the  application timing and fault-recovery is achieved via full platform restart and software reload, enabled by the short restart time of embedded systems. Schedulability analysis is used to ensure that the timing constraints of critical plant control tasks are always satisfied in spite of faults and consequent restarts. We derive schedulability results for four restart-tolerant task models. We use a simulator to evaluate and compare the performance of the considered scheduling models. 
\end{abstract}


%
\IEEEpeerreviewmaketitle

\section{Introduction}

Embedded controllers with smart capabilities are being increasingly
used to implement safety-critical cyber-physical systems (SC-CPS). In
fact, modern medical devices, avionic and automotive systems, to name
a few, are required to deliver increasingly high performance without
trading off in robustness and assurance. Unfortunately, satisfying the
increasing demand for smart capabilities and high performance means
deploying increasingly complex systems. Even seemingly simple embedded
control systems often contain a multitasking real-time kernel, support
networking, utilize open source
libraries~\cite{sulaman2014development}, and a number of specialized
hardware components (GPUs, DSPs, DMAs, \emph{etc.}). As systems
increase in complexity, however, the cost of formally verifying their
correctness can easily explode.




Testing alone is insufficient to guarantee the correctness of
safety-critical systems, and unverified software may violate system safety
in multiple ways, for instance: (i) the control application may
contain unsafe logic that guides the system towards hazardous states;
(ii) the logic may be correct but incorrectly implemented thereby
creating unsafe commands at runtime~(application-level faults); (iii)
even with logically safe, correctly implemented control applications,
faults in underlying software layers (e.g. RTOS and device
drivers) can prevent the correct execution of the controller and
jeopardize system safety~(system-level faults). Due to the limited
feasibility and high cost of platform-wide formal verification, we
take a different approach. Specifically, we propose a
software/hardware co-design methodology to deploy SC-CPS that (i)
provide strong safety guarantees; and (ii) can utilize unverified
software components to implement complex safety-critical
functionalities.

Our approach relies on a key observation: by performing careful
boot-sequence optimization, many embedded platforms and RTOS utilized
in automotive industry, avionics, and manufacturing can be {\bf
  entirely restarted} within a very short period of time. Restarting a
computing system and reloading a fresh image of all the software~(\ie
RTOS, and applications) from a read-only source appears to be an
effective approach to recover from unexpected faults. Thus, we propose
the following: as soon as a fault that disrupts the execution of
critical components is detected, the entire system is restarted. After
a restart, all the safety-critical applications that were impacted by
the restart are re-executed. If restart and re-execution of critical
tasks can be performed \emph{fast enough}, i.e. such that timing
constraints are always met in spite of task re-executions, the
physical system will remain oblivious to and will not be impacted by
the occurrence of faults.

The effectiveness of the proposed restart-based recovery relies on
timely detection of faults to trigger a restart. Since detecting
logical faults in complex control applications can be challenging, we
utilize Simplex Architecture~\cite{sha1998dependable,
  Sha01usingsimplicity, sha1996evolving} to construct control
software.  Under Simplex, each control application is divided into three tasks; safety controller, complex controller, and decision module. And, safety of the system relies solely on
timely execution of the safety controller tasks. From a scheduling
perspective, safety is guaranteed if safety controller tasks have enough CPU
cycles to re-execute and finish before their deadlines in spite of
restarts. In this paper, we analyze the conditions for a periodic task
set to be schedulable in the presence of restarts and
re-executions. We assume that when a restart occurs, the task instance
executing on the CPU and any of the tasks that were preempted before
their completion will need to re-execute after the restart. In
particular, we make the following contributions:

\begin{itemize}
\item We propose a Simplex Architecture that can be recovered via
  restarts and implemented on a {\bf single processing unit};
\item We derive the response time analysis under fixed-priority with
  fully preemptive and fully non-preemptive disciplines in presence of
  restart-based recovery and discuss pros and cons of each one;
\item We propose response time analysis of fixed-priority scheduling
  in presence of restarts for tasks with preemption
  thresholds~\cite{wang1999scheduling} and non-preemptive ending
  intervals~\cite{1508455} to improve feasibility of task sets;
\end{itemize}

\section{Background on Simplex Design}

Our proposed approach is designed for the control tasks that are constructed following Simplex verified design guidelines~\cite{sha1998dependable, Sha01usingsimplicity, sha1996evolving}. In the following, we review Simplex design concepts which are essential for the understanding the methodology of this paper. The goal of original Simplex approach is to design controllers, such that the faults in controller software do not cause the physical plant to violate its safety conditions.

\begin{definition}
States of the physical plant that do not violate any of the safety conditions are referred to as \textit{admissible states}.
 The physical subsystem is assumed safe as long it is in an admissible state. Likewise those that violate the constraints are referred to as \textit{inadmissible states}.
\end{definition}

Under Simplex Architecture, each controlled physical process/component requires a safety controller, a complex controller, and a decision module. In the following, we define properties of each component. 

\begin{definition}
\textit{Safety Controller} is a controller for which a subset of the admissible states called \textit{recoverable states} exists with the following property; If the safety controller starts controlling the plant from one of those states, all future states will remain admissible. The set of recoverable states is denoted by $\mathcal{R}$. Safety controller is formally verified \ie it does not contain logical or implementation errors. 
\end{definition}

\begin{definition}
\textit{Complex Controller} is the main controller task of the system that drives the plant towards mission set points. However, it is unverified \ie it may contain unsafe logic or implementation bugs. As a result, it may generate commands that force the plant into inadmissible states.
\end{definition}

\begin{definition}
\textit{Decision Module} includes a switching logic that can determine if the physical plant will remain safe~(stay within the admissible states) if the control output of complex controller is applied to it.
\end{definition}

There are multiple approaches to design a verified safety controller and decision module. The first proposed way is based on solving linear matrix inequalities~\cite{seto1999case}, which has been used to design Simplex systems as complicated as automated landing maneuvers for an F-16~\cite{seto2000case}. 
According to this approach, safety controller is designed by approximating the system with
linear dynamics in the form: $\dot{x} = Ax + Bu$, for state vector
$x$ and input vector $u$. In this
approach, \emph{safety constraints} are expressed as linear
constraints in the form of linear matrix inequalities. These constraints, along with the linear
dynamics for the system, are the inputs to a convex optimization
problem that produces both linear proportional controller gains $K$,
as well as a positive-definite matrix $P$. The resulting linear-state feedback controller, $u = Kx$, yields closed-loop
dynamics in the form of $\dot{x} = (A + BK)x$. Given a state $x$, when the input $Kx$
is used, the $P$ matrix defines a
Lyapunov potential function $(x^TPx)$ with a negative-definite
derivative. As a result,
the stability of the physical plant is guaranteed using Lyapunov's
direct or indirect methods. Furthermore, matrix $P$ defines an ellipsoid in the state space where all safety constraints are
satisfied when $x^TPx < 1$. If sensors' and actuators' saturation
points were provided as constraints, the states inside
the ellipsoid can be reached using control commands within the sensor/actuator limits.

In this way, when the gains $K$ define the safety controller, the ellipsoid of states $x^T P x < 1$ is the set of recoverable states~$\mathcal{R}$. This ellipsoid is used to determine the proper switching logic of the decision module. As long as the system remains inside the ellipsoid, any unverified, complex controller can be used. If the state approaches the boundary of the ellipsoid, control can be switched to the safety controller which will drive the system towards the equilibrium point where $x^T P x = 0$.

An alternative approach for constructing a verified safety controller and decision module is proposed in~\cite{rt-reach}. Here, safety controller is constructed similar to the above approach~\cite{seto1999case}. 
However, a novel switching logic is proposed for decision module to decide about the safety of complex controller commands. Intuitively, this check is examining what happens if the complex controller is used for a single control interval of time, and then the safety controller is used thereafter. If the reachable states contain an inadmissible state (either before the switch or after), then the complex controller cannot be used for one more control interval. Assuming the system starts in a recoverable state, this guarantees it will remain in the recoverable set for all time.


A system that adheres to this architecture is guaranteed to remain safe only if safety controller and decision module execute correctly. In this way, the safety premise is valid only if safety controller and decision module execute in every control cycle. Original Simplex design, only protects the plant from faults in the complex controller. For instance, if a fault in the RTOS crashes the safety controller or decision module, safety of the physical plant will get violated.

\section{System Model and Assumptions}
\label{sec:methodology}
In this section we formalize the considered system and task
model, and discuss the assumptions under which our methodology is
applicable.

\subsection{Periodic Tasks}

We consider a task set $\mathcal{T}$ composed of $n$ periodic tasks
$\tau_1 \ldots \tau_n$ executed on a uniprocessor under fixed priority
scheduling. Each task $\tau_i$ is assigned a priority level
$\pi_i$. We will implicitly index tasks in decreasing priority order,
\ie , $\tau_i$ has higher priority than $\tau_k$ if $i < k$.  Each
\emph{periodic task} $\tau_i$ is expressed as a tuple $(C_i, T_i,
D_i, \phi_i)$, where $C_i$ is the worst-case execution time~(WCET), 
$T_i$ is the period, $D_i$ is the relative deadline of each
task instance, and $\phi_i$ is the phase~(the release time
of the first instance). The following relation holds: $C_i \le D_i \le
T_i$. Whenever $D_i = T_i$ and $\phi_i =0$, we simply express tasks parameters as
$(C_i, T_i)$. Each instance of a periodic task is called \emph{job}
and $\tau_{i,k}$ denotes the $k$-th job of task $\tau_i$. Finally,
$hp(\pi_i)$ and $lp(\pi_i)$ refer to the set of tasks with higher or
lower priority than $\pi_i$ \ie $hp(\pi_i) = \{\tau_j\mid
\pi_i<\pi_j\}$ and $lp(\pi_i) = \{\tau_j\mid \pi_i > \pi_j\}$.  We
indicate with $T_r$ the minimum inter-arrival time of faults and
consequent restarts; while $C_r$ refers to the time required to
restart the system.



\subsection{Critical and Non-Critical Workload}

It is common practice to execute multiple controllers for different
processes of physical plant on a single processing unit. In this work,
we use the Simplex Architecture~\cite{sha1998dependable, Sha01usingsimplicity, sha1996evolving} to implement each
controller. As a result, three periodic tasks are associated with
every controller: (i) a safety controller~(SC) task, (ii) a complex
controller~(CC) task, and (iii) a decision module~(DM) task. In
typical designs, the three tasks that compose the same controller have
the same period, deadline, and release time.

\begin{remark}
  SC's control command is sent to the actuator buffer immediately
  before the termination of that job instance. Hence, the timely
  execution of SC tasks is {\bf necessary and sufficient} for the
  safety of the physical plant.
  \label{remark:safety}
\end{remark}

As a result, out of the three tasks, SC must execute first and write
its output to the actuator command buffer. Conversely, DM needs to
execute last, after the output of CC is available, to decide if it is
safe to replace SC's command which is already in the actuator
buffer. Hence, the priorities of the controller tasks need to be in
the following order\footnote{We assume enough priority levels to
  assign distinct priorities.}: $\pi(DM) < \pi(CC) < \pi(SC)$.  Note
that, the precedence constraint that SC, CC and DM tasks must execute
in this order can be enforced through the proposed priority ordering
if self-suspension and blocking on resources are excluded and if the
scheduler is work-conserving. We consider fixed priority scheduling,
which is work-conserving and we assume SC, CC and DM tasks do not
self-suspend. Moreover, tasks controlling different components are
independent; SC, CC and DM tasks for the same component share sensors
and actuator channels. Sensors are read-only resources, do not require
locking/synchronization and therefore cannot cause blocking. A given
SC task, may only share actuator channels with the corresponding DM
task. However, SC jobs execute before DM jobs and do not self-suspend,
hence DM cannot acquire a resource before SC has finished its
execution.

The set of all the SC tasks on the system is called \emph{critical}
workload. All the CC and DM tasks are referred as \emph{non-critical}
workload. Safety is guaranteed if and only if all the critical tasks
complete before their deadlines. Whereas, execution of non-critical
tasks is not crucial for safety; these tasks are said to be
mission-critical but not safety-critical. We assume that the first
$n_c$ tasks of $\mathcal{T}$ are critical. Notice that with this
indexing strategy, any critical task has a higher priority than any
non-critical task.




\subsection{Fault Model}

In this paper, we consider two types of fault for the system;
application-level faults and system-level faults. We make the
following assumptions about the faults that our system safely handles:

\begin{description}

\item[A1] The original image of the system software is stored on a
  read-only memory unit (\eg, E$^2$PROM).  This content is
  unmodifiable at runtime.
\item[A2] Application faults may only occur in the unverified
  workload~(\ie all the application-level processes on the system
  except SC and DM tasks).	
\item[A3] SC and DM tasks are independently verified and
  fault-free. They might, however, fail silently~(no output is
  generated) due to faults in software layers or other applications on
  which they depend.
\item[A4] We only consider system- and application-level faults that
  cause SC and DM tasks to fail silently but do not change their logic
  or alter their output.
\item[A5] Faults do not alter sensor readings.
\item[A6] Once SC or CC tasks have send their outputs to the
  actuators, the output is unaffected by system restart. As such, a
  task does not need to be re-executed if it has completed correctly
  before a restart.
\item[A7] Re-executing a task even if it has completed correctly does
  not negatively impact system safety.

\item[A8] Monitoring and initializer tasks~(Section~\ref{sec:detection}) are independently verified and
fault-free. We assume that system faults can only cause silent failures in these tasks~(no output or correct output).

\item[A9] $T_r$ is larger than the least common
  multiple~(hyper-period\footnote{Length of the hyper-period can be
    significantly reduced if the control tasks have harmonic
    periods.}) of critical tasks, i.e.  $T_r > \text{LCM}\{T_k \mid k
  \leq n_c\}$

	
	
\end{description}

\subsection{Scheduler State Preservation and Absolute Time}

In order to know what tasks were preempted, executing, or completed
after a restart occurs, it is fundamental to carry a minimum amount of
data across restarts. As such, our architecture requires the existence
of a small block of non-volatile memory (NVM). We also require the
presence of a monotonic clock unit (CLK) as an external device. CLK is
used to derive the absolute time after a system restart.  Since we
assume periodic tasks, the information provided by CLK is enough to
determine the last release time of each task. Whenever a critical task
is completed, the completion timestamp obtained from CLK is written to
NVM, overwriting the previous value for the same task. We assume that
a timestamp update in NVM write can be performed in a transactional
manner.

\subsection{Recovery Model}
\label{sec:recoverymodel}

The recovery action we assume in this paper is to restart the entire
system, reload all the software~(RTOS and applications) from a
read-only storage unit, and re-execute all the jobs that were released
but not completed at the time of restart. The priority of a
re-executing instance is the same as the priority of the original
job. Within $C_r$ time units, the system~(RTOS and applications)
reloads from a read-only image, and re-execution is initiated as
needed. Figure~\ref{fig:sched_preempt} depicts how restart and task
re-execution affect the scheduling of 3 real-time tasks ($\tau_1,
\tau_2, \text{and} \tau_3$). When the restart happens at $t = 10 -
\epsilon$, $\tau_1$ was still running. Moreover, $\tau_2$ and $\tau_3$
were preempted at time $t = 9$ and $t = 8$, respectively. Hence all
the three task will need to be re-executed after the restart.

\begin{figure}
  \centering
  \includegraphics[trim={.3cm 0.2cm 0cm 0},clip,
  width=0.8\linewidth]{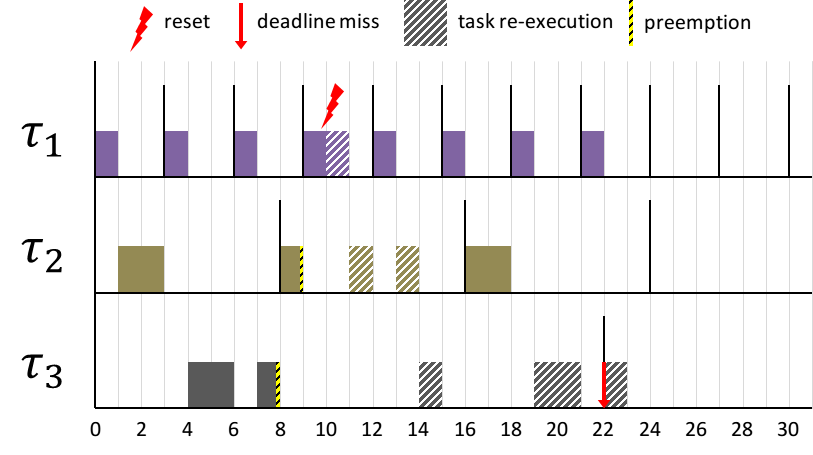}
  \caption{Example of fully preemptive system with 3 tasks $\tau_1 =
    (1, 3); \tau_2 = (2, 8); \tau_3 = (4, 22)$, and restart at $t = 10
    - \epsilon$ ($C_r = 0$). The taskset is schedulable without
    restarts, however, restart and task re-execution causes a deadline
    miss at $t = 22$.}
  \label{fig:sched_preempt}
\end{figure}

System restart is triggered only after a fault is detected. The
following definition of fault is used throughout this paper:

\begin{definition}[Critical Fault:]
  any system misbehavior that leads to a non-timely execution of any
  of the critical tasks.
\end{definition}

It follows that (i) the absence of critical faults guarantees that
every critical task completes on time; that (ii) the timely completion
of all the critical tasks ensures system safety by Assumptions A3-A7;
and that (iii) being able to detect all critical faults and re-execute
critical tasks by their deadline is enough to ensure timely completion
of critical tasks in spite of restarts. We discuss critical fault
detection in Section~\ref{sec:detection}; and we analyze system
schedulability in spite of critical faults in
Section~\ref{sec::schedulability} and~\ref{sec:limited-sched}. Since handling critical faults is necessary and sufficient~(Remark~\ref{remark:safety}) for safety, in the rest of this paper, the term fault is used to refer to critical faults.



\subsection{RBR-Feasibility}
A task set $\mathcal{T}$ is said to be feasible under restart based
recovery~(RBR-Feasible) if the following two conditions are satisfied;
\ci there exists a schedule such that all jobs of all the critical
tasks, or their potential re-executions, can complete successfully
before their respective deadlines, even in the presence of a
system-wide restart, occurring at any arbitrary time during
execution. \cii All jobs, including instances of non-critical tasks,
can complete before their deadlines when no restart is performed.

%

\section{Fault Detection and Task Re-execution}
\label{sec:detection}



As described in the previous section, a successful fault-detection
approach must be able to detect any fault before the
deadline of a \emph{critical} task is missed, and to trigger the recovery
procedure. Another key requirement is being able to correctly
re-execute \emph{critical} jobs that were affected by a restart.

{\bf Fault detection with watchdog (WD) timer:} to explain the
detection mechanism, we rely on the concept of \emph{ideal worst-case
  response time}, i.e. the worst-case response time of a task when
there are no restarts~(and no re-executions) in the system. 
We use $\mathcal{\hat{R}}_i$ to denote the ideal worst-case response
time of $\tau_i$.  $\mathcal{\hat{R}}_i$ can be derived using
traditional response-time analysis, or with the analysis proposed in
Section~\ref{sec::schedulability} and~\ref{sec:limited-sched} by imposing all the overhead terms
$O^x_y = 0$.

%

If no faults occur in the system, every instance of $\tau_i$ is
expected to finish its execution within at most $\mathcal{\hat{R}}_i$
time units after its arrival time. This can be checked at runtime with
a monitoring task. Recall that each critical job records its
completion timestamp $t^{comp}_i$ to NVM. The monitoring task checks
the latest timestamp for $\tau_i$ at time instants $kT_i +
\mathcal{\hat{R}}_i$. If $t^{comp}_i < kT_i$ it means that $\tau_i$
has not completed by its ideal worst-case response time. Hence, a
restart needs to be triggered. A single WD can be used to always
ensure a system reset if any of the critical tasks does not complete
by its ideal worst-case response time. The following steps are
performed:
\begin{enumerate}
  
\item Determine the next checkpoint instant $t_{next}$ and checked
  critical task $\tau_i$ as follows:
  \begin{equation}
    t_{next} = \min_{i \leq c_n}\bigg(\lfloor (t - \phi_i)/T_i \rfloor T_i + \phi_i + \mathcal{\hat{R}}_i\bigg).
  \end{equation}
  In other words, $t_{next}$ captures the earliest instant of time
  that corresponds to the elapsing of the ideal worst-case response
  time of some critical task $\tau_i$;

\item Set the WD to restart the system after $t - t_{next} + \epsilon$
  time units;

\item Terminate and set wake-up time at $t_{next}$;

\item At wake-up, check if $\tau_i$ completed correctly: if
  $t^{comp}_i$ obtained from NVM satisfies $t^{comp}_i \ge \lfloor (t
  - \phi_i)/T_i \rfloor T_i + \phi_i$, then acknowledge the WD so that
  it does not trigger a reset. Otherwise, do nothing, causing a
  WD-induced reset after $\epsilon$ time units.

\item Continue from Step 1 above.
\end{enumerate}

Notice that this simple solution utilizes only one WD timer, and
handles all the silent failures. The advantage of using hardware WD
timers is that if any faults in the OS or other applications, prevent
the time monitor task from execution, the WD which is already set,
will expire and restart the system.

To determine which tasks to execute after a restart, we propose the following. Immediately after the
reboot completes, a initializer task calculates the latest release time of
each task $\tau_i$ using $\lfloor (t-\phi_i)/T_i \rfloor T_i + \phi_i$ where
$t$ is the current time retrieved from CLK. Next, it retrieves the last recorded completion time of the task, $t^{comp}_i$, from NVM. If $t^{comp}_i < \lfloor (t-\phi_i)/T_i \rfloor T_i + \phi_i$, then the task needs to be executed, and is added to the list of ready tasks. It is possible that a task completed its execution prior to the restart, but was not able to record the completion time due to the restart. In this case, the task will be executed again which does not impact the safety due to Assumption~A7.

\section{RBR-Feasibility Analysis}
\label{sec::schedulability}

As mentioned in Section~\ref{sec:detection}, re-execution of jobs
impacted by a restart must not cause any other job to miss a
deadline. Also, re-executed jobs need to meet their deadlines as
well. The goal of this section is to present a set of sufficient
conditions to reason about the feasibility of a given task set
$\mathcal{T}$ in presence of restarts~(RBR-feasibility). In
particular, in Sections~\ref{sec:presched} and~\ref{sec:npresched}, we
present a methodology that provides a sufficient condition for exact
RBR-Feasibility analysis of preemptive and non-preemptive task sets.


\textbf{Definition}: \emph{Length of level-$i$ preemption chain} at
time $t$ is defined as sum of the executed portions of all the tasks
that are in the preempted or running state, and have a priority
greater than or equal to $\pi_i$ at $t$. \emph{Longest level-$i$
  preemption chain} is the preemption chain that has the longest
length over all the possible level-$i$ preemption chains.

For instance, consider a fully preemptive task set with four tasks;
$C_1$ = 1, $T_1 = 5$, $C_2$ = 3, $T_2 = 10$, $C_3$ = 2, $T_3 = 12$,
$C_4$ = 4, $T_3 = 15$, and $\pi_4<\pi_3 < \pi_2 < \pi_1$. For this
task set, the longest level-$3$ and level-$4$ preemption chains are 6
and 10, respectively.

\subsection{Fully Preemptive Task Set}
\label{sec:presched}


Under fully preemptive scheme, as soon as a higher priority task is
ready, it preempts any lower priority tasks running on the
processor. To calculate the worst-case response time of task $\tau_i$,
we have to consider the case where the restart incurs the longest
delay on finishing time of the job. For a fully preemptive task set,
this occurs when every task $\tau_k$ for $k \in \{2,\ldots,i\}$ is
preempted immediately prior to its completion by $\tau_{k-1}$ and
system restarts right before the completion of $\tau_1$. In other
words, when tasks $\tau_1$ to $\tau_i$ form the longest level-i
preemption chain. An example of this case is depicted in
Figure~\ref{fig:sched_preempt}. In this case, the restart and
consequent re-execution causes a deadline miss at $t = 22$. The
example uses only integer numbers for task parameters, hence tasks can
be preempted only up to 1 unit of time before their completion. In the
rest of the paper, we discuss our result assuming that tasks' WCETs
are real numbers.

Theorem~\ref{thm:resp_time_fullypreemptive} provides 
RBR-feasibility conditions for a fully preemptive task set
$\mathcal{T}$, under fixed priority scheduling.

\begin{theorem}
\label{thm:resp_time_fullypreemptive} 
A set of preemptive periodic tasks $\mathcal{T}$ is
RBR-Feasible under fixed priority algorithm if the response time
$R_i$ of each task $\tau_i$ satisfies the condition: $\forall \tau_i
\in \mathcal{T}, R_i \le D_i$. $R_i$ is obtained for the smallest
value of $k$ for which we have $R_i^{(k+1)} = R_i^{(k)}$.
\begin{equation}
\label{eq:responsetimefullypreemptive}
R_i^{(k+1)} = C_i + \sum_{\tau_j \in hp(\pi_i)}\ceil[\bigg]{\frac{R_i^{(k)}}{T_j}} C_j + \mathcal{O}_{i}^{p}
\end{equation}
where the restart overhead $\mathcal{O}_{i}^{p}$ on response time is
\begin{equation}
\label{eq:overheadonfp}
\mathcal{O}_{i}^{p} = \left\{
\begin{array}{lr}
C_r + \sum_{\tau_j \in hp(\pi_i) \cup \{ \tau_i \}} C_j & i\leq n_c\\
0& i > n_c
\end{array}\right.
\end{equation}
\end{theorem}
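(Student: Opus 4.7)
The plan is to prove the theorem in two pieces corresponding to the two branches of $\mathcal{O}_i^p$ in \eqref{eq:overheadonfp}. For a non-critical task ($i > n_c$), RBR-feasibility only requires that the task meet its deadline when no restart occurs, so substituting $\mathcal{O}_i^p = 0$ collapses the recurrence to the classical Joseph--Pandya fixed-point response-time equation for fixed-priority preemptive scheduling, which is known to be exact. Hence this case follows immediately from standard theory and from the definition of RBR-feasibility.

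The substance of the argument lies in the critical case $i \le n_c$. Here I would construct an adversary that maximizes the response time of a tagged job $J$ of $\tau_i$ and then show that the bound matches \eqref{eq:responsetimefullypreemptive}. The adversary superposes two worst-case patterns: (a) a synchronous release of all tasks in $hp(\pi_i)$ together with $J$ at the release instant of $J$, which upper-bounds interference during a window of length $R_i$ by $\sum_{\tau_j \in hp(\pi_i)} \lceil R_i/T_j \rceil C_j$; and (b) a level-$i$ preemption chain terminated by a restart, arranged so that just before the restart, every task of priority at least $\pi_i$ has completed all but an infinitesimal $\epsilon$ of its current instance. As $\epsilon \downarrow 0$, the wasted pre-restart work converges to $\sum_{\tau_j \in hp(\pi_i) \cup \{\tau_i\}} C_j$, and the system pays an additional $C_r$ for the reboot before re-execution can begin. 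Assumption~A9 guarantees that at most one restart falls inside $[r_J, r_J + R_i]$, so no iterated overhead term is needed.

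Adding the four contributions---$C_i$ for $\tau_i$'s final successful execution, the interference term, $C_r$, and the wasted-chain term---yields exactly the recurrence in \eqref{eq:responsetimefullypreemptive}. The standard monotonicity argument then shows that the Picard iteration $R_i^{(k+1)}$ starting from a sufficiently small $R_i^{(0)}$ is non-decreasing in $k$ and bounded above whenever the task is RBR-feasible, hence converges to the smallest fixed point. If this fixed point is at most $D_i$, every job of $\tau_i$---including any potential re-execution---meets its deadline in every admissible restart scenario, establishing RBR-feasibility.

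The main obstacle I expect is justifying rigorously that the ceiling term $\lceil R_i/T_j \rceil C_j$ and the extra $C_j$ inside $\mathcal{O}_i^p$ together upper-bound the \emph{total} execution of $\tau_j$ inside the window without double-counting. The cleanest approach is a busy-interval budget argument: in the worst case the processor is never idle during $[r_J, r_J + R_i]$, and this interval decomposes into (i) a segment of length $C_r$ used by the reboot, (ii) the wasted executions of the chain instances of tasks in $hp(\pi_i) \cup \{\tau_i\}$, (iii) the successful re-execution of $J$ contributing $C_i$, and (iv) the remaining higher-priority work, which is bounded by $\sum_{\tau_j \in hp(\pi_i)} \lceil R_i/T_j \rceil C_j$ because each additional higher-priority job accounted for here must have a release instant inside a window of length $R_i$. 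Separating (ii) from (iv) cleanly---so that a chain instance of $\tau_j$ is not simultaneously counted by the ceiling---is the delicate step and requires careful release-time bookkeeping for both the pre-restart chain and the post-restart re-executions.
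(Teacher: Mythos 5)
Your proposal is correct and follows essentially the same route as the paper's proof: the classical fixed-priority response-time recurrence plus a restart overhead equal to $C_r$ and the wasted work of a level-$i$ preemption chain in which each task $\tau_k$ is preempted by $\tau_{k-1}$ (and $\tau_1$ is hit by the restart) infinitesimally before completion, giving $\sum_{\tau_j \in hp(\pi_i)\cup\{\tau_i\}} C_j$ in the limit. The double-counting worry you flag resolves exactly as in your busy-interval sketch---the ceiling term counts each released job's nominal demand once while the overhead term counts only the additional discarded execution---a point the paper's proof leaves implicit.
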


\begin{proof}
  First, note that Equation~\ref{eq:responsetimefullypreemptive}
  without the overhead term $\mathcal{O}_{i}^{p}$, corresponds to the
  classic response time of a task under fully preemptive fixed
  priority scheduling~\cite{liu1973scheduling}. The additional
  overhead term represents the worst-case interference on the task
  instance under analysis introduced by restart time and the
  re-execution of the preempted tasks. We need to show that the
  overhead term can be computed using
  Equation~\ref{eq:overheadonfp}. Consider the scenario in which every
  task $\tau_k$ is preempted by $\tau_{k-1}$ after executing for
  $\delta_i$ time units where $k \in \{2,...,i\}$. And, a restart
  occurs after $\tau_1$ executed for $\delta_1$ time units. Due to the
  restart, all the tasks have to re-execute and the earliest time
  $\tau_i$ can finish its execution is $C_r
  +\delta_i+...+\delta_1+C_i+...+C_1$. Hence, it is obvious that the
  later each preemption or the restart in $\tau_1$ occurs, the more
  delay it creates for $\tau_i$. Once a task has completed, it no
  longer needs to be re-executed. Therefore, the maximum delay of each
  task is felt immediately prior to the task's completion
  instant. Thus, the overhead is maximized when each $\tau_k$ is
  preempted by $\tau_{k-1}$ for $k \in \{2,..,i\}$ and restart occurs
  immediately before the end of $\tau_1$.
\end{proof}

As seen in this section, the worst-case overhead of restart-based
recovery in fully preemptive setting occurs when system restarts at
the end of longest preemption chain. Therefore, to reduce the overhead
of restarting, length of the longest preemption chain must be
reduced. In order to reduce this effect we investigate the non-preemptive setting in the following section.

%
%
%

\subsection{Fully Non-Preemptive Task set}
\label{sec:npresched}

Under this model, jobs are not preempted until their execution
terminates. At every termination point, the scheduler selects the task
with the highest priority amongst all the ready tasks to execute. The
main advantage of non-preemptive task set is that at most one task
instance can be affected by restart at any instant of time.

Authors in~\cite{Davis2007} showed that in non-preemptive scheduling,
the largest response time of a task does not necessarily occur in the
first job after the critical instant. In some cases, the
high-priority jobs activated during the non-preemptive execution of
$\tau_i$’s first instance are pushed ahead to successive jobs, which
then may experience a higher interference. Due to this phenomenon, the
response time analysis for a task cannot be limited to its first job,
activated at the critical instant, as done in preemptive scheduling,
but it must be performed for multiple jobs, until the processor
finishes executing tasks with priority higher than or equal to
$\pi_i$. Hence, the response time of a task needs to be computed
within the longest \emph{Level-$i$ Active Period}, defined as
follows~\cite{4271700, 6164261}.

\textbf{Definition}: The \emph{Level-$i$ Active Period} $L_i$ is an interval
$[a,b)$ such that the amount of processing that still needs to be
performed at time $t$ due to jobs with priority higher than or equal
to $\pi_i$, released strictly before $t$, is positive for all $t \in
(a,b)$ and null in $a$ and $b$. It can be computed using the following
iterative relation:
\begin{equation}
\label{eq:busyperiod}
L_{i}^{(q)} = B_i + C_i + \sum _{j\in hp(\pi_i)}\lceil {L_i^{(q-1)}}/{T_j} \rceil C_j + \mathcal{O}_i^{np}
\end{equation}
Here, $\mathcal{O}_i^{np}$ is the maximum overhead of restart on the
response time of a task. In the following we describe how to calculate
this value. $L_i$ is the smallest value for which $L_i^{(q)} =
L_i^{(q-1)}$. This indicates that the response time of task $\tau_i$
must be computed for all jobs $\tau_{i,k}$ with $k \in [1,K_i]$ where
$K_i = \lceil L_i/T_i \rceil$.

Theorem~\ref{thm:resp_time_nonp} describes the sufficient conditions
under which a fault and the subsequent restart do not compromise the
timely execution of the critical workload under fully non-preemptive
scheduling. Notice that, as mentioned earlier, it is assumed that the schedule is resumed with the highest priority active job after restart.


\begin{theorem}
\label{thm:resp_time_nonp} 
A set of non-preemptive periodic tasks is RBR-feasible
under fixed-priority if the response time $R_i$ of each task $\tau_i$,
calculated through following relation, satisfies the condition: $\forall \tau_i
\in \mathcal{T}; R_i \le D_i$.
\begin{equation}
R_i = \max_{k\in[1,K_i]} \{F_{i,k} - (k-1)T_i\}
\end{equation}
where $F_{i,k}$ is the finishing time of job $\tau_{i,k}$ given by
\begin{equation}
\label{eq:nonprerespplusblockingtime}
F_{i,k} = S_{i,k} + C_i
\end{equation}
Here, $S_{i,k}$ is the start time of job $\tau_{i,k}$, obtained for
the smallest value that satisfies $S_{i,k}^{(q+1)} = S_{i,k}^{(q)}$ in the
following relation
\begin{equation}
S^{(k+1)}_{i,k} = B_i +  \sum_{\tau_j \in hp(\pi_i)}\left(\floor[\bigg]{\frac{S_{i,k}^{(k)}}{T_j}}+1\right) C_j +  \mathcal{O}_{i}^{np}
\label{eq:reset_resp}
\end{equation}
In Equation~\ref{eq:reset_resp}, term $B_i$ is the blocking from low
priority tasks and is calculated as $B_i = \max_{\tau_j \in
  lp{(\pi_i)}}{\{C_j\}}.$ The term $\mathcal{O}_{i}^{np}$ represents the
overhead on task execution introduced by restarts and is calculated as
follows:
\begin{equation}
\mathcal{O}_{i}^{np} = 
\left\{\begin{array}{lr}
C_r + max\left\{ \{C_j \mid j \in hp(\pi_i) \}\cup C_i \right\} & i \leq n_c\\
0 & i > n_c 
\end{array}\right.
\label{eq:nonpresetinterf}
\end{equation}
\end{theorem}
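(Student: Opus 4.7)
The plan is to follow the same two-layer structure used in the proof of Theorem~\ref{thm:resp_time_fullypreemptive}: first recognize that stripping the restart overhead $\mathcal{O}_i^{np}$ from Equation~\ref{eq:reset_resp} leaves exactly the classical fixed-priority non-preemptive start-time recurrence~(with blocking $B_i$ from one lower-priority job, interference $(\lfloor S_{i,k}/T_j\rfloor+1)C_j$ from each higher-priority task accounting for the non-preemption boundary, and computation over the level-$i$ active period since the worst case for $\tau_i$ need not be its first job). Given this, the theorem reduces to showing that $\mathcal{O}_i^{np}$ correctly upper-bounds the additional delay that a single restart can inflict on the finishing time of job $\tau_{i,k}$.

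The central observation is the advertised advantage of non-preemption: at any instant of time \emph{at most one} job is in the ``started-but-not-completed'' state, namely the one currently executing. Hence, when a restart fires at a time $t$, the only work that must be re-executed (beyond the restart cost $C_r$ itself) is the partially completed job on the CPU at time $t$. To upper-bound the delay this induces on $\tau_i$'s finishing time $F_{i,k}$, I would apply the same shifting argument used for Theorem~\ref{thm:resp_time_fullypreemptive}: moving the restart as late as possible before a completion maximizes the wasted work, so the restart should be placed an infinitesimal amount before the completion of the running job. The running job can be (a) a higher-priority task $\tau_j\in hp(\pi_i)$, whose re-execution contributes $C_j$ extra interference to $\tau_i$, (b) $\tau_i$ itself, contributing an extra $C_i$, or (c) a lower-priority task blocking $\tau_i$, but such an execution already finishes before $\tau_i$ can start and its re-execution is preempted (actually, cannot preempt) by the now-pending higher-priority $\tau_i$ — the extra delay in this case is upper-bounded by $B_i$ which is already accounted for by the blocking term and is no larger than $\max\{C_j : j\in hp(\pi_i)\}\cup\{C_i\}$ in the relevant regime. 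Choosing the worst case among (a) and (b) yields exactly $C_r + \max\bigl(\{C_j\mid j\in hp(\pi_i)\}\cup\{C_i\}\bigr)$, matching Equation~\ref{eq:nonpresetinterf}.

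Two loose ends then need to be closed. First, I must argue that it suffices to charge only \emph{one} restart in the response-time window. This follows from Assumption~A9: since $T_r$ exceeds the hyper-period of the critical tasks and we are computing $R_i\le D_i\le T_i$ for a critical task, no two restarts can land inside a single response-time interval. For non-critical tasks the overhead is defined to be zero because their timely execution is not required for RBR-feasibility (only condition (ii) of RBR-feasibility applies, evaluated in the absence of restarts). Second, I have to confirm that the maximum over $k\in[1,K_i]$ of $F_{i,k}-(k-1)T_i$ still upper-bounds the worst-case response time once $\mathcal{O}_i^{np}$ is added; this is immediate because adding a constant overhead inside the level-$i$ active-period recurrence only prolongs the active period $L_i$ (recomputed via Equation~\ref{eq:busyperiod} with $\mathcal{O}_i^{np}$), and the maximum is taken over all jobs inside that enlarged period.

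The step I expect to be most delicate is the ``shift-the-restart-late'' argument in case (c): one must rule out pathological schedules where a restart during a low-priority blocking job triggers its re-execution that is itself preempted by the now-released $\tau_i$, potentially producing a residual that re-blocks later instances. The cleanest way to handle this is to observe that under fixed-priority non-preemptive scheduling the re-executed low-priority job, being released no earlier than the restart and having lower priority than $\tau_i$, cannot start before $\tau_i$ completes, and therefore contributes no additional delay to $F_{i,k}$ beyond what is already captured by $B_i$. Once that case is dismissed, the worst-case overhead collapses to the stated expression and the theorem follows.
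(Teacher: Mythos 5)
Your proposal is correct and follows essentially the same route as the paper's proof: strip the overhead term to recover the classical non-preemptive start-time recurrence, then bound the extra delay by observing that non-preemption leaves at most one partially executed job to re-run, with the worst case being a restart placed infinitesimally before the completion of either a higher-priority job (delaying $S_{i,k}$ by $C_r + C_j$) or of $\tau_i$ itself (costing $C_r + C_i$). Your additional care in dismissing the lower-priority blocking case and in invoking Assumption~A9 to justify charging only a single restart goes slightly beyond the paper's terser argument, but it does not constitute a different approach.
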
 

\begin{proof}
  Equation~\ref{eq:reset_resp}
  and~\ref{eq:nonprerespplusblockingtime}, without the restart
  overhead term $\mathcal{O}_{i}^{np}$, are proposed in~\cite{4271700,
    6164261} to calculate the worst-case start time and response time
  of a task under non-preemptive setting.

  We need to show that the overhead term can be computed using
  Equation~\ref{eq:nonpresetinterf}. Under non-preemptive discipline,
  restart only impacts a single task executing on the CPU at the
  instant of restart. There are two possible scenarios that may result
  in the worst-case restart delay on finish time of task
  $\tau_i$. First, when $\tau_i$ is waiting for the higher priority
  tasks to finish their execution, a restart can occur during the
  execution of one of the higher priority tasks $\tau_j$ and delay the
  start time $\tau_i$ by $C_r+C_j$. Alternatively, a restart can occur
  infinitesimal time prior to the completion of $\tau_i$ and cause an
  overhead of $C_r + C_i$.  Hence, the worst-case delay due to a restart
  is caused by the task with the longest execution time among the
  task itself and the tasks with higher
  priority~(Equation~\ref{eq:nonpresetinterf}). The restart overhead is not included in the response-time of non-critical tasks~($\mathcal{O}_{i}^{np}=0$ for $i>n_c$).
  \end{proof}

\begin{figure}[ht]
	\centering
	\includegraphics[trim={.3cm 0.2cm 0cm 0},clip,
	width=0.8\linewidth]{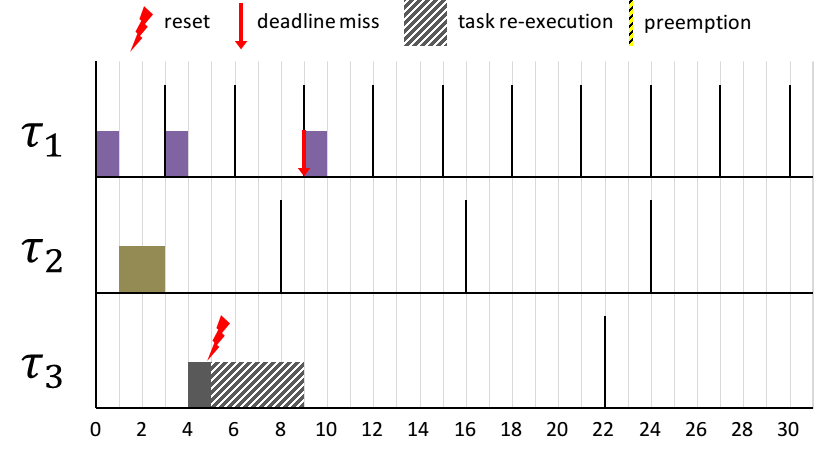}
	\caption{Example of fully non-preemptive system with 3 tasks $\tau_1
		= (1, 3); \tau_2 = (2, 8); \tau_3 = (4, 22)$, and restart at $t =
		5 - \epsilon$ ($C_r = 0$). Restart and task re-execution causes a
		deadline miss at $t = 9$.}
	\label{fig:sched_nonpreempt}
\end{figure}

Unfortunately, under non-preemptive scheduling, blocking time due to
low priority tasks, may cause higher priority tasks with short
deadlines to be non-schedulable. As a result, when preemptions are disabled, there exist task sets with arbitrary low utilization that despite having the
lowest restart overhead, are not RBR-Feasible. Figure~\ref{fig:sched_nonpreempt} uses the
same task parameters as in Figure~\ref{fig:sched_preempt}. The plot
shows that the considered task system is not schedulable under fully
non-preemptive scheduling when a restart is triggered at $t = 5 -
\epsilon$.



\section{Limited Preemptions}
\label{sec:limited-sched}

In the previous section, we analyzed the RBR-Feasibility of task sets
under fully preemptive and fully non-preemptive scheduling. Under full
preemption, restarts can cause a significant overhead because the
longest preemption chain can contains all the tasks. On the other hand,
under non-preemptive scheduling, the restart overhead is
minimum. However, due to additional blocking on higher priority
tasks, some task sets, even with low utilization, are not schedulable.

In this section we discuss two alternative models with limited
preemption. Limited preemption models are suitable for restart-based
recovery since they enable the necessary preemptions for the
schedulability of the task set, but avoid many unnecessary preemptions
that occur in fully preemptive scheduling. Consequently, they induce
lower restarting overhead and exhibit higher schedulability.

\subsection{Preemptive tasks with Non-Preemptive Ending}
\label{sec:npregions}

As seen in the previous sections, reducing the number and length of
preempted tasks in the longest preemption chain, can reduce the
overhead of restarting and increase the RBR-Feasibility of task sets. On
the other hand, preventing preemptions entirely is not desirable since
it can impact feasibility of the high priority tasks with short
deadlines. As a result, we consider a hybrid preemption model in
which, a job once executed for longer than $C_i-Q_i$ time units,
switches to non-preemptive mode and continues to execute until its
termination point. Such a model allows a job that has mostly completed
to terminate, instead of being preempted by a higher priority
task. $Q_i$ is called the size of non-preemptive ending interval of $\tau_i$
and $Q_i \leq C_i$. The model we utilize in this section, is a special
case of the model proposed in~\cite{1508455} which aims to decrease
the preemption overhead due to context switch in real-time operating
systems. In Figure~\ref{fig:sched_npreempt_reg1}, we consider a task
set with the same parameters as in Figure~\ref{fig:sched_preempt},
where in addition task $\tau_3$ has a non-preemptive region of length
$Q_3 = 1$. The preemption chain that caused the system in
Figure~\ref{fig:sched_preempt} to be non-schedulable cannot occur and
the instance of the task becomes schedulable under restarts. With the same setup,
Figure~\ref{fig:sched_npreempt_reg2} considers the case when a reset
occurs at $t = 9 - \epsilon$.

\subsubsection{RBR-Feasibility Analysis}
Theorem~\ref{theorem:withblocking} provides the RBR-feasibility
conditions of a task-set with non-preemptive ending intervals. In this
theorem, $S_{i,k}$ represents the worst case start time of the
non-preemptive region of the re-executed instance of job
$\tau_{i,k}$. Similarly, $F_{i,k}$ is used to represent the worst-case
finish time. The arrival time of instance $k$ of task $\tau_{i,k}$ is
$(k-1)T_i$.


\begin{theorem}
\label{theorem:withblocking}
A set of periodic tasks $\mathcal{T}$ with non-preemptive
ending regions of length $Q_i$, is RBR-Feasible under a fixed priority
algorithm if the worst-case response time $R_i$ of each task $\tau_i$,
calculated from Equation~\ref{eq:wcrtwithnpregions}, satisfies the
condition: $\forall \tau_i \in \mathcal{T}, R_i \le D_i$.
	\begin{equation}
	\label{eq:wcrtwithnpregions}
	R_i = \max_{k\in[1,K_i]} \{F_{i,k} - (k-1)T_i\}
	\end{equation}
	where
	\begin{equation}
	\label{eq:respplusblockingtime}
	F_{i,k} = S_{i,k} + Q_i
	\end{equation}
	and $S_{i,k}$ is obtained for the smallest value of $q$ for
        which we have $S_{i,k}^{(q+1)} = S_{i,k}^{(q)}$ in the
        following
	\begin{multline}
	\label{eq:respwithblocking}
	S_{i,k}^{(q+1)} = B_i + (k-1)C_i + C_i - Q_i\\ + \sum_{\tau_j
          \in
          hp(\tau_i)}\left(\floor[\bigg]{\frac{S_{i,k}^{(q)}}{T_j}}+1\right)
        C_j + \mathcal{O}_{i}^{npe}
	\end{multline}
        Here, the term $B_i$ is the blocking from low priority tasks
        and is calculated by
\begin{equation}
\label{eq:blockingTimeThm3}
	B_i = \max_{\tau_k \in lp{(\pi_i)}}{\{Q_k\}}.
\end{equation}
$\mathcal{O}_i^{npe}$ is the maximum overhead of the restart on the response
time and is calculated as follows:
	\begin{equation}
	\label{eq:interwithblocking}
	\mathcal{O}_i^{npe} = \left\{
	\begin{array}{lr}
C_r +  \mathcal{WCWE}(i)&i \leq n_c\\
0&i>n_c
	\end{array}\right.
	\end{equation}
	where $\mathcal{WCWE}(i)$ is the worst-case amount of the
        execution that may be wasted due to the restarts. It is given
        by the following where $\mathcal{WCWE}(1) = C_1$ and
	\begin{equation}
	\label{eq:wcwe}
	\mathcal{WCWE}(i) = C_i + max\bigg(0, \mathcal{WCWE}(i-1) - Q_{i} \bigg)
	\end{equation}
	$K_i$ in Equation~\ref{eq:wcrtwithnpregions} can be computed from Equation~\ref{eq:busyperiod} by using $\mathcal{O}_{i}^{npe}$ instead of $\mathcal{O}_{i}^{np}$.	
\end{theorem}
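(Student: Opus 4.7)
The plan is to mirror the four-part structure of the proof of Theorem~\ref{thm:resp_time_nonp}, adapted to the limited-preemption model: justify the blocking term, show that the iterative recurrence for $S_{i,k}$ matches the classical limited-preemption start-time analysis (minus the overhead), bound the overhead $\mathcal{O}_i^{npe}$, and then lift single-job results to the level-$i$ active period via $K_i$. For the classical part, Equation~\eqref{eq:respwithblocking} without $\mathcal{O}_i^{npe}$ is the standard fixed-point for the start instant of the non-preemptive ending region under fixed priority~\cite{1508455,4271700}, so I would cite it and focus the new work on the restart-specific terms.

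Next I would justify $B_i = \max_{\tau_k \in lp(\pi_i)} Q_k$ by observing that a lower-priority task can delay $\tau_i$ only while inside its own uninterruptible $Q_k$ tail: everywhere else a newly arriving $\tau_i$ preempts it. The maximal such tail that $\tau_i$ can encounter at its critical instant is therefore $\max_{\tau_k \in lp(\pi_i)} Q_k$, achieved when some low-priority task enters its NP region an instant before $\tau_i$'s release.

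The heart of the proof is the overhead bound $\mathcal{O}_i^{npe} = C_r + \mathcal{WCWE}(i)$ for critical tasks. I would prove by induction on $i$ that $\mathcal{WCWE}(i)$ upper-bounds the largest amount of CPU time spent on priority-$\geq i$ work before a restart and repeated afterwards. The base case $\mathcal{WCWE}(1)=C_1$ is immediate: $\tau_1$ cannot be preempted, so at most its entire execution can be wasted by a restart occurring just before it completes. For the inductive step, at the instant of the restart the level-$i$ preemption chain falls into one of two configurations: either (a) $\tau_i$ is still in its preemptable portion (having executed at most $C_i - Q_i$) with a higher-priority chain of wasted work of total length at most $\mathcal{WCWE}(i-1)$ sitting on top, giving $(C_i - Q_i) + \mathcal{WCWE}(i-1)$; or (b) $\tau_i$ itself is running at the restart instant (in its NP region or just about to finish), which contributes at most $C_i$ and admits no higher-priority extension. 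Taking the maximum yields $C_i + \max(0,\mathcal{WCWE}(i-1) - Q_i)$, matching~\eqref{eq:wcwe}. Adding the constant $C_r$ captures the reboot latency. For non-critical tasks, re-executions after the restart are not required by RBR-feasibility, so $\mathcal{O}_i^{npe}=0$.

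With the overhead in hand, the finish time \eqref{eq:respplusblockingtime} follows immediately since once $\tau_i$ enters its NP region it runs uninterruptedly for $Q_i$ time units. Finally, as noted before Equation~\eqref{eq:busyperiod}, non-preemptive execution can defer the worst-case response to a job other than the first, so I would bound the relevant horizon by the level-$i$ active period $L_i$ computed from~\eqref{eq:busyperiod} with $\mathcal{O}_i^{npe}$ in place of $\mathcal{O}_i^{np}$, iterate over $k \in [1, K_i]$, and conclude $R_i = \max_k \{F_{i,k} - (k-1)T_i\}$. The main obstacle I anticipate is in step~(iii): establishing rigorously that the two extremal configurations for $\mathcal{WCWE}(i)$ are genuinely exhaustive and dominating. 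In particular one must argue that pushing preemption and restart instants as late as possible only increases wasted work, and that a chain whose top task is already inside its NP region cannot be extended upward, so no third scenario (e.g., $\tau_i$ preempted strictly before its last preemptable instant yet somehow yielding more waste than case~(a)) can arise.
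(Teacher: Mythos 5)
Your proposal is correct and follows essentially the same route as the paper's proof: decompose the finish time as the start of the non-preemptive ending region plus $Q_i$, justify $B_i$ by the longest lower-priority non-preemptive tail, and derive $\mathcal{WCWE}(i)$ from the same two extremal configurations (restart just before $\tau_i$ completes, giving $C_i$, versus preemption at $C_i - Q_i$ topped by a higher-priority chain of length $\mathcal{WCWE}(i-1)$), taking the maximum over the level-$i$ active period. The only difference is presentational: you frame the $\mathcal{WCWE}$ recursion as an explicit induction and flag the exhaustiveness of the two cases as a point needing care, whereas the paper simply asserts it.
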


\begin{proof}
  Authors in~\cite{wang1999scheduling} show that the worst-case
  response time of task $\tau_i$ is the maximum difference between
  the worst case finish time and the arrival time of the jobs that arrive within
  the level-$i$ active period~(Equation~\ref{eq:wcrtwithnpregions}).

  Hence, we must compute the worst-case finish time of job $\tau_{i,k}$ in the presence of restarts. When a restart occurs
  during the execution of $\tau_{i,k}$ or while it is in preempted
  state, $\tau_{i,k}$ needs to re-execute. Therefore, the finish time
  of the $\tau_{i,k}$ is when the re-executed instance completes. As a
  result, to obtain the worst-case finish time of $\tau_{i,k}$, we
  calculate the response time of each instance when a restart with
  longest overhead has impacted that instance. We break down the
  worst-case finish time of $\tau_{i,k}$ into two intervals:  the worst-case start time of the
  non-preemptive region of the re-executed job and the length of the
  non-preemptive region,
  $Q_i$~(Equation~\ref{eq:respplusblockingtime}). $S_{i,k}$ in
  Equation~\ref{eq:respplusblockingtime}, is the worst-case start time
  of non-preemptive region of job $\tau_{i,k}$ which can be
  iteratively obtained from
  Equation~\ref{eq:respwithblocking}. Equation~\ref{eq:respwithblocking}
  is an extension of the start time computation
  from~\cite{6164261}. In the presence of non-preemptive regions, an
  additional blocking factor $B_i$ must be considered for each task
  $\tau_i$, equal to the longest non-preemptive region of the lower
  priority tasks. Therefore, the maximum blocking time that $\tau_i$
  may experience is $ B_i = \max_{\tau_j \in
    lp{(\pi_i)}}{\{Q_j\}}$. $B_i$ is added to the worst-case start
  time of the task in Equation~\ref{eq:respwithblocking}.

	
  For a task $\tau_i$ with the non-preemptive region of size $Q_i$,
  there are two cases that may lead to the worst-case wasted
  time. First case is when the system restarts immediately prior to the completion
  of $\tau_i$, in which case the wasted time is $C_i$. Second case
  occurs when $\tau_i$ is preempted immediately before the
  non-preemptive region begins~(\ie at $C_i-Q_i$) by the higher
  priority task $\tau_{i-1}$. In this case, the wasted execution is
  $C_i - Q_i$ plus the maximum amount of the execution of the higher priority tasks
  that may be wasted due to the restarts~(\ie $\mathcal{WCWE} ( i - 1)$). The
  worst-case wasted execution is the maximum of these two values \ie
  $\mathcal{WCWE}(i) = max(C_i,C_i-Q_i+\mathcal{WCWE}(i-1)) = C_i +
  max(0, \mathcal{WCWE}(i-1)-Q_i)$. Similarly, 
  $\mathcal{WCWE}(i-1)$ can be computed recursively.
\end{proof}

\begin{figure}
	\centering
	\includegraphics[trim={.3cm 0.2cm 0cm 0},clip,
	width=0.8\linewidth]{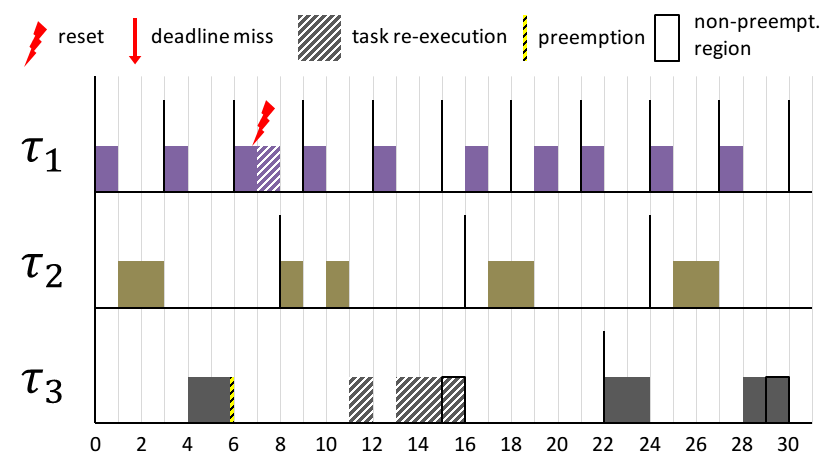}
	\caption{Example of system with 3 tasks $\tau_1 = (1, 3); \tau_2 =
		(2, 8); \tau_3 = (4, 22)$, where $\tau_3$ has a non-preemptive
		region of size $Q_3 = 1$. Restart occurs at $t = 7 - \epsilon$
		($C_r = 0$). The task set is schedulable with restarts.}
	\label{fig:sched_npreempt_reg1}
\end{figure}

\begin{figure}
	\centering
	\includegraphics[trim={.3cm 0.2cm 0cm 0},clip,
	width=0.8\linewidth]{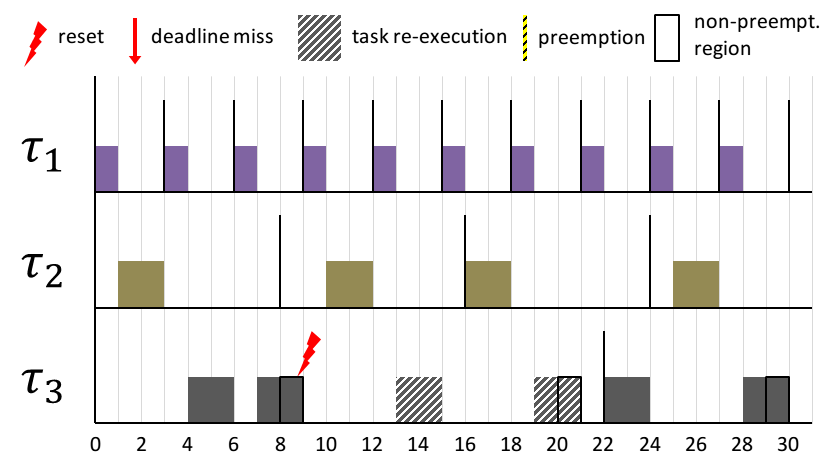}
	\caption{Example of system with 3 tasks $\tau_1 = (1, 3); \tau_2 =
		(2, 8); \tau_3 = (4, 22)$, where $\tau_3$ has a non-preemptive
		region of size $Q_3 = 1$. Restart occurs at $t = 9 - \epsilon$
		($C_r = 0$). The task set is schedulable with restarts.}
	\label{fig:sched_npreempt_reg2}
\end{figure}

\subsubsection{Optimal Size of Non-Preemptive Regions}

RBR-Feasibility of a taskset depends on the choice of $Q_i$s for the
tasks. In this section, we present an approach to determine the size
of non-preemptive regions $Q_i$ for the tasks to maximize the
RBR-Feasibility of the task set.

First, we introduce the the notion of \textit{blocking tolerance} of a
task $\beta_i$. $\beta_i$ is the maximum time units that task $\tau_i$ may be
blocked by the lower priority tasks, while it can still meet its
deadline.  Algorithm~\ref{alg:findbetta}, uses binary search and the
response time analysis of task~(from
Theorem~\ref{theorem:withblocking}) to find $\beta_i$ for a task
$\tau_i$.

\begin{algorithm}[ht]
	\SetAlgoLined\DontPrintSemicolon
	\SetKwFunction{algo}{FindBlockingTolerance}
	\SetKwProg{myalg}{}{}{}
	\myalg{\algo{$\tau_i, \mathcal{T}, Q_1,...,Q_i$ }
	}
	{	
		start = 0; end = $T_i$ /* Initialize the interval */ \\	
		\textbf{if} $R_i$(start) $> T_i$ \textbf{then} return $\tau_i$ Not Schedulable;\\
		\While{end - start $> \epsilon$}{
			middle = (start + end)/2 \\
				\textbf{if} $R_{i,B_i=\text{middle}} > T_i$ \textbf{then} end = middle ;\\
				\textbf{else} start = middle\\
		}
			return $\beta_i = $ start;
	}
	\caption{Binary Search for Finding $\beta_i$}
	\label{alg:findbetta}
\end{algorithm}

In Algorithm~\ref{alg:findbetta}, $R_{i,B_i=middle}$ is computed as
described in
Theorem~\ref{theorem:withblocking}~(Equation~\ref{eq:wcrtwithnpregions}),
where instead of using the $B_i$ from
Equation~\ref{eq:blockingTimeThm3}, the blocking time is set to the value of 
$middle$.

Note that, if Algorithm~\ref{alg:findbetta} cannot find a $\beta_i$
for task $\tau_i$, this task is not schedulable at all. This indicates that there is not any selection of $Q_i$s that would make $\mathcal{T}$ RBR-Feasible. 

Given that task $\tau_1$ has the highest priority, it may not be
preempted by any other task; hence we set $Q_1 = C_1$. The next
theorem shows how to drive optimal $Q_i$ for the rest of the tasks in
$\mathcal{T}$. The results are optimal, meaning that if there is at
least one set of $Q_i$s under which $\mathcal{T}$ is RBR-Feasible, it
will find them.

\begin{theorem}
	\label{theorem:maxblockingtimes}
	The optimal set of non-preemptive interval $Q_i$s of tasks
        $\tau_i$ for $2 \leq i \leq n$ is given by:
	\begin{equation}
	\label{eq:maxblockingtimes}
	Q_i = min\big\{min\{\beta_j \mid j\in hp(\pi_i)\},C_i\big\}
	\end{equation}
	assuming that $\beta_j \geq 0$ for $j \in hp(\pi_i)$.
\end{theorem}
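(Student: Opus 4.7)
The plan is to establish Theorem~\ref{theorem:maxblockingtimes} in two stages: first verify that the $Q_i$ values given by Equation~\eqref{eq:maxblockingtimes} are feasible (i.e., yield an RBR-feasible schedule), then show that this choice is optimal in the sense that whenever any RBR-feasible assignment of non-preemptive regions exists, the formula also produces one. Both stages hinge on a handful of monotonicity properties of the response-time machinery developed for Theorem~\ref{theorem:withblocking}, so I would prove those first.

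The monotonicity facts I would establish up front are the following. (i)~For fixed $Q_1,\ldots,Q_j$, the worst-case response time $R_j$ given by Equations~\eqref{eq:wcrtwithnpregions}--\eqref{eq:respwithblocking} is non-decreasing in $B_j$, since $B_j$ enters as an additive term in the fixed-point iteration of a monotone operator; consequently $\beta_j$ is well-defined by the binary search of Algorithm~\ref{alg:findbetta}. (ii)~By a straightforward induction on $j$ using the recursion in Equation~\eqref{eq:wcwe}, the worst-case wasted execution $\mathcal{WCWE}(j)$ is non-increasing in each of $Q_2,\ldots,Q_j$; hence $\mathcal{O}_j^{npe}$ is non-increasing, $R_j$ is non-increasing in each of these arguments, and the blocking tolerance $\beta_j$ is correspondingly non-decreasing. (iii)~The schedulability of $\tau_j$ depends on $Q_k$ for $k > j$ only through $B_j = \max_{\tau_k \in lp(\pi_j)} Q_k$, which is trivially non-decreasing in each such $Q_k$.

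Feasibility then follows essentially by construction: Equation~\eqref{eq:maxblockingtimes} enforces $Q_k \le \beta_j$ for every pair $(k,j)$ with $j \in hp(\pi_k)$, so for each $\tau_j$ one has $B_j = \max_{k \in lp(\pi_j)} Q_k \le \beta_j$, and by the very definition of $\beta_j$ this gives $R_j \le D_j$, yielding RBR-feasibility of the entire task set.

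For optimality I would run an inductive exchange argument. Let $(Q'_1,\ldots,Q'_n)$ be any RBR-feasible assignment and $(Q_1,\ldots,Q_n)$ the values produced by Equation~\eqref{eq:maxblockingtimes}; the claim is that $Q_i \ge Q'_i$ for every $i$. The base case is $Q_1 = C_1 \ge Q'_1$. For the inductive step, fact~(ii) together with the inductive hypothesis gives $\beta_j \ge \beta'_j$ for every $j < i$, where $\beta'_j$ denotes the blocking tolerance induced by the primed $Q$-values. Feasibility of the primed assignment combined with fact~(iii) forces $Q'_i \le \beta'_j$ for every $j \in hp(\pi_i)$, and $Q'_i \le C_i$ by definition; chaining these inequalities yields $Q'_i \le \min\{\min_{j\in hp(\pi_i)} \beta_j,\, C_i\} = Q_i$. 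Since the formula values are themselves feasible by the previous paragraph, an RBR-feasible assignment is produced whenever one exists. The main obstacle will be making monotonicity~(ii) rigorous, because the dependence of $\beta_j$ on $Q_2,\ldots,Q_j$ passes through the nested $\mathcal{WCWE}$ recursion, the fixed-point iteration of Equation~\eqref{eq:respwithblocking}, and the binary search of Algorithm~\ref{alg:findbetta}; everything after that reduces to a routine propagation of inequalities.
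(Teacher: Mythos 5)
Your argument is correct and takes essentially the same route as the paper's proof: both rest on the observations that enlarging $Q_i$ can only decrease the response time of $\tau_i$ and of lower-priority tasks (through the $C_i-Q_i$ term in the start-time recurrence and through the $\mathcal{WCWE}$ recursion), while it can only hurt higher-priority tasks through the blocking term bounded by their tolerances $\beta_j$, so the maximal choice $Q_i=\min\big\{\min\{\beta_j \mid j\in hp(\pi_i)\},C_i\big\}$ is optimal. The paper states these monotonicity facts informally and concludes immediately; your explicit fixed-point monotonicity lemmas and the componentwise-domination induction ($Q'_i\le Q_i$ for any RBR-feasible assignment) supply exactly the rigor the paper's compressed argument leaves implicit.
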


\begin{proof}

  Increasing the length of $Q_i$ for a task reduces the response time
  in two ways. First, from Equation~\ref{eq:respwithblocking},
  increasing $Q_i$ reduces the start time of the job $S_{i,k}$ which
  reduces the finish time and consequently the response time of
  $\tau_i$. Second, from Equation~\ref{eq:wcwe}, increasing $Q_i$
  reduces the restart overhead $\mathcal{O}^{npe}_{i}$ on the task and
  lower priority tasks which in turn reduces the response time. Thus
  $Q_i$ may increase as much as possible up to the worst-case
  execution time $C_i$; $Q_i \leq C_i$. However, the choice of $Q_i$
  must not make any of the higher priority tasks unschedulable. As a
  result, $Q_i$ must be smaller than the smallest blocking tolerance
  of all the tasks with higher priority than $\pi_i$; $Q_i \leq
  min\{\beta_j| j\in hp(\pi_i)\}$. Combining these two conditions
  results in the relation of Equation~\ref{eq:maxblockingtimes}.
\end{proof}

\subsection{Preemption Thresholds}
\label{sec:premthresh}


In the previous section, we discussed non-preemptive endings as a way
to reduce the length of the longest preemption chain and decrease the
overhead of restarts. In this section, we discuss an alternative
approach to reduce the number of tasks in the longest preemption chain
and thus reduce the overhead of restart-based recovery.

To achieve this goal, we use the notion of preemption thresholds which
has been proposed in~\cite{wang1999scheduling}. According to this
model, each task $\tau_i$ is assigned a nominal priority $\pi_i$ and a
preemption threshold $\lambda_i \geq \pi_i$. In this case, $\tau_i$
can be preempted by $\tau_h$ only if $\pi_h > \lambda_i$. At
activation time, priority of $\tau_i$ is set to the nominal value
$\pi_i$. The nominal priority is maintained as long as the task is
kept in the ready queue. During this interval, the execution of
$\tau_i$ can be delayed by all tasks with priority $\pi_h > \pi_i$,
and by at most one lower priority task with threshold $\lambda_l \geq
\pi_i$. When all such tasks complete, $\tau_i$ is dispatched for
execution, and its priority is raised to $\lambda_i$. During
execution, $\tau_i$ can be preempted by tasks with priority $\pi_h >
\lambda_i$. When $\tau_i$ is preempted, its priority is kept at
$\lambda_i$.


%

Restarts may increase the response time of $\tau_{i,k}$ in one of two
ways; A restart may occur after the arrival of the job but
before it has started, delaying its start time
$S_{i,k}$. Alternatively, the system can be restarted after the job has
started. We use $\mathcal{O}^{pt,s}_{i}$ to denote the worst-case overhead of a restart that occurs before the start time of a job in task sets with preemption thresholds. And, $\mathcal{O}^{pt,f}_{i}$ is used to represent the worst-case overhead of a restart that occurs after the start time of a job in task sets with preemption thresholds.

In Figure~\ref{fig:sched_preempt_tr_ok}, we consider a task set with
the same parameters as in Figure~\ref{fig:sched_preempt} where in
addition $\tau_2$ and $\tau_3$ have a preemption threshold equal to
$\lambda_2 = 1$ and $\lambda_3 = 2$, respectively. This assignment is
effective to prevent a long preemption chain, and the jobs do not miss their deadline when the restart occurs at $t = 7 - \epsilon$. Notice that, the
task set is still not RBR-Feasible since if the restart occurs at $t = 9 - \epsilon$, some job will miss the deadline,
as shown in Figure~\ref{fig:sched_preempt_tr_fail}.

\begin{figure}
  \centering
  \includegraphics[trim={.3cm 0.2cm 0cm 0},clip,
  width=0.8\linewidth]{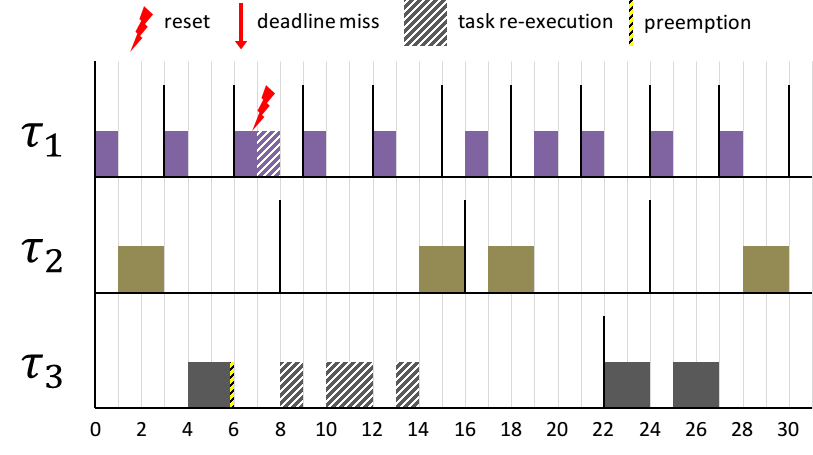}
  \caption{Example of system with 3 tasks $\tau_1 = (1, 3); \tau_2 =
    (2, 8); \tau_3 = (4, 22)$, where $\tau_2$ and $\tau_3$ have a
    preemption threshold of $\lambda_2 = 1$ and $\lambda_3 = 2$,
    respectively. Restart occurs at $t = 7 - \epsilon$ ($C_r = 0$). In
    this case, the task set remains schedulable.}
  \label{fig:sched_preempt_tr_ok}
\end{figure}

\begin{figure}
  \centering
  \includegraphics[trim={.3cm 0.2cm 0cm 0},clip,
  width=0.8\linewidth]{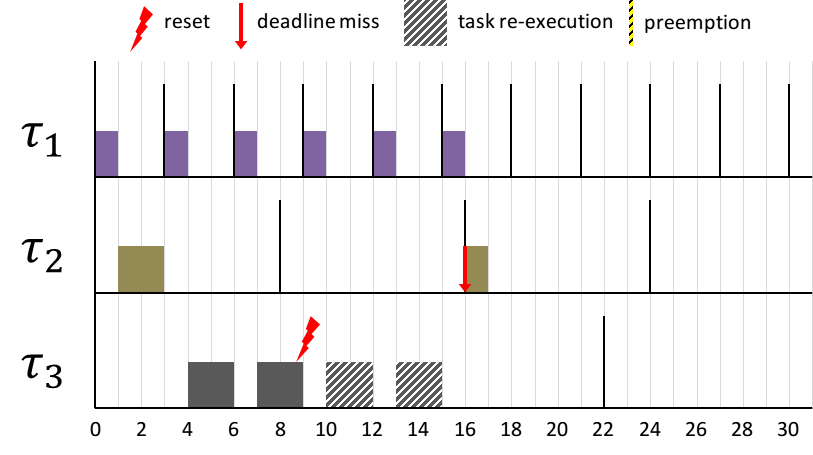}
  \caption{Example of system with 3 tasks $\tau_1 = (1, 3); \tau_2 =
    (2, 8); \tau_3 = (4, 22)$, where $\tau_2$ and $\tau_3$ have a
    preemption threshold of $\lambda_2 = 1$ and $\lambda_3 = 2$,
    respectively. Restart occurs at $t = 9 - \epsilon$ ($C_r =
    0$). The task set is not schedulable.}
  \label{fig:sched_preempt_tr_fail}
\end{figure}

\begin{theorem}
  \label{thm:thm6}
  For a task set with preemption thresholds under fixed priority, the worst-case overhead of a restart that occurs after the start of the job $\tau_{i,k}$ is $\mathcal{O}^{pt,f}_{i} = C_r +
  \mathcal{WCWE}(i)$ where
	\begin{equation}
	\label{eq:WCWE-preemption-thresholds}
	\mathcal{WCWE}(i) = C_i + \max\{\mathcal{WCWC}(j) \mid \tau_j \in hp(\lambda_i) \}
	\end{equation}
	Here, $\mathcal{WCWC}(1) = C_1$.
\end{theorem}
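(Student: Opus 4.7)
The plan is to reason about the work lost when a restart interrupts $\tau_{i,k}$ after its dispatch, and then add the constant $C_r$ for the restart itself. First I would observe that once $\tau_{i,k}$ has started executing, its effective priority is raised to $\lambda_i$, so by the preemption-threshold semantics recalled earlier in this section, only jobs of tasks $\tau_j$ with $\pi_j > \lambda_i$, \ie $\tau_j \in hp(\lambda_i)$, can preempt it. The same rule applies recursively: any job that preempts such a $\tau_j$ must belong to $hp(\lambda_j)$. Consequently, at the instant the restart fires, the set of jobs that are either currently running or have been started and preempted forms a nested chain $\tau_i, \tau_{j_1}, \tau_{j_2}, \ldots$ with $\tau_{j_{k+1}} \in hp(\lambda_{j_k})$ for each $k$. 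Two \emph{sibling} jobs cannot coexist at the same level of this chain: the presence of one already elevates its effective priority to its own threshold, which by definition prevents any peer-level job from being dispatched until it completes.

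Next I would argue that, per the recovery model of Section~\ref{sec:recoverymodel}, every started-but-unfinished job along that chain must be re-executed in full after the restart. Hence the extra delay on top of the nominal response time of $\tau_{i,k}$ is exactly $C_r$ for the restart itself, plus the total execution already consumed along the chain at the restart instant. Since the adversary can place the restart arbitrarily close to the completion of the task sitting at the top of the chain, and each preemption within the chain can happen arbitrarily close to the completion of the task being preempted, the worst case assigns $C_{j_k}$ time units of wasted execution to each task in the chain. Maximising $C_i + \sum_{k \ge 1} C_{j_k}$ over all admissible chains rooted at $\tau_i$ gives the worst-case wasted execution $\mathcal{WCWE}(i)$.

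I would then formalise this maximisation by induction on the priority level, reading the definition of $\mathcal{WCWE}$ from the highest priority downward. For the base case, $\tau_1$ cannot be preempted by any task, so the only contribution is $C_1$, matching $\mathcal{WCWE}(1) = C_1$. For the inductive step, an optimal chain rooted at $\tau_i$ selects, among the tasks allowed to sit immediately above $\tau_i$ on the chain (namely those in $hp(\lambda_i)$), the one that maximises the worst-case wasted execution of the sub-chain above, which by the induction hypothesis is $\mathcal{WCWE}(j)$. Adding the in-progress contribution $C_i$ of $\tau_i$ itself yields the recurrence in Equation~\ref{eq:WCWE-preemption-thresholds}, and prepending $C_r$ for the reboot concludes with $\mathcal{O}^{pt,f}_i = C_r + \mathcal{WCWE}(i)$.

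The part I expect to require the most care is the ``one task per threshold level'' observation from the first paragraph. It is what justifies taking a \emph{max} (rather than a \emph{sum}) over $hp(\lambda_i)$ in Equation~\ref{eq:WCWE-preemption-thresholds}, and it relies specifically on the preemption-threshold mechanics: a peer-level task cannot be dispatched while another peer is still in progress with elevated priority. I would spell this out explicitly, both to justify the recursive structure of $\mathcal{WCWE}$ and to rule out pathological scenarios in which two higher-priority jobs simultaneously contribute wasted work at the same threshold level, which would otherwise break the tightness of the claimed bound.
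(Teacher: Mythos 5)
Your proposal is correct and follows essentially the same route as the paper's own (much terser) proof: once $\tau_{i,k}$ is dispatched its priority is elevated to $\lambda_i$, the started-but-unfinished jobs at the restart instant form a nested preemption chain through successive $hp(\lambda_j)$ sets, and the recursion in Equation~\ref{eq:WCWE-preemption-thresholds} computes the longest such chain, with $C_r$ added for the reboot. Your explicit justification of why a \emph{max} rather than a \emph{sum} is taken over $hp(\lambda_i)$ (only one job per threshold level can have started) is a detail the paper leaves implicit, and is a worthwhile addition rather than a deviation.
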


\begin{proof}
  After a job $\tau_{i,k}$ starts, its priority is raised to $\lambda_i$. 
  In this case, the restart will create the worst-case overhead if it occurs at the end of longest preemption chain
  that includes $\tau_i$ and any subset of the tasks with $\pi_h > \lambda_i$.
  Equation~\ref{eq:WCWE-preemption-thresholds} uses a
  recursive relation to calculate the length of longest preemption
  chain consisting of $\tau_i$ and all the tasks with $\pi_h >
  \lambda_i$.
\end{proof}

\begin{theorem}
	\label{thm:lastTheorem}
  For a task set with preemption thresholds under fixed priority, a
  restart occurring before the start time of a job $\tau_{i,k}$, can
  cause the worst-case overhead of
	\begin{equation}
          \mathcal{O}^{pt,s}_{i} = C_r +  \max\{\mathcal{WCWE}(j) \mid \tau_j \in hp(\pi_i) \}
	\label{eq:max-o-s}
	\end{equation}
	where $\mathcal{WCWE}(j)$ can be computed from
        Equation~\ref{eq:WCWE-preemption-thresholds}.
\end{theorem}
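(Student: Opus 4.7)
The plan is to reduce Theorem~\ref{thm:lastTheorem} to Theorem~\ref{thm:thm6} applied to whichever task happens to be on the CPU at the restart instant. First, I would observe that if a restart strikes before $\tau_{i,k}$ has started, then $\tau_{i,k}$ itself has accumulated no execution and therefore contributes nothing to the wasted-work component of the overhead; only $C_r$, plus the work lost by the task currently in flight (together with all higher-priority work it has accumulated along its preemption chain), matters.

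Next, I would classify what can be running on the processor at that instant. Since $\tau_{i,k}$ has arrived but has not started, the CPU must be busy with some task other than $\tau_i$; under the preemption-threshold model, this is either a task $\tau_j$ with nominal priority $\pi_j > \pi_i$, or a lower-priority task whose elevated threshold $\lambda_l \geq \pi_i$ currently blocks $\tau_i$. The blocking case is already absorbed into the blocking term $B_i$ used in Equation~\ref{eq:respwithblocking} rather than into the restart overhead, so for the purpose of bounding $\mathcal{O}^{pt,s}_{i}$ I restrict attention to the first case.

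For a fixed executing task $\tau_j \in hp(\pi_i)$, Theorem~\ref{thm:thm6} already characterizes the worst-case wasted execution when a restart lands at the end of $\tau_j$'s preemption chain: it is exactly $\mathcal{WCWE}(j)$ as defined in Equation~\ref{eq:WCWE-preemption-thresholds}, since once $\tau_j$ has started, its priority is raised to $\lambda_j$ and the recursion accumulates the chain formed by $\tau_j$ together with tasks in $hp(\lambda_j)$. Taking the maximum over all candidate $\tau_j \in hp(\pi_i)$ yields $\max\{\mathcal{WCWE}(j) \mid \tau_j \in hp(\pi_i)\}$, and adding $C_r$ recovers Equation~\ref{eq:max-o-s}.

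The main obstacle is the tightness direction: arguing that this maximum is actually attainable, i.e., that there exists a legitimate arrival pattern in which $\tau_{i,k}$ is queued but not started, while the higher-priority task $\tau_{j^\star}$ realizing the maximum has its full preemption chain in flight and the restart fires precisely at the corresponding worst-case moment. I would handle this by invoking the same critical-instant construction used in Theorem~\ref{thm:thm6} for $\tau_{j^\star}$, and then phasing $\tau_{i,k}$'s release so that it arrives just after $\tau_{j^\star}$'s chain has begun to form but before the chain unwinds, ensuring $\tau_{i,k}$ remains ready-but-not-started throughout. This phasing is admissible because $\tau_{i,k}$ is never dispatched while a task of priority higher than $\pi_i$ is executing.
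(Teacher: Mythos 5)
Your proposal is correct and follows essentially the same route as the paper's own (much terser) proof: both bound the overhead by $C_r$ plus the worst-case wasted execution $\mathcal{WCWE}(j)$ of the preemption chain headed by whichever task $\tau_j \in hp(\pi_i)$ is in flight when the restart hits, maximized over all such $\tau_j$ via the recursion of Theorem~\ref{thm:thm6}. Your additional case analysis (disposing of the blocking lower-priority task via $B_i$) and the attainability construction are refinements the paper omits, not a different argument.
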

\begin{proof}
  Start time of a task can be delayed by a restart impacting any of
  the tasks with priority higher than
  $\pi_i$. Equation~\ref{eq:max-o-s} recursively finds the longest
  possible preemption chain consisting of any subset of tasks with $\pi_h > \pi_i$.
\end{proof}

Due to the assumption of one fault per hyper-period, each job may be
impacted by at most one of $\mathcal{O}^{pt,f}_{i}$ or
$\mathcal{O}^{pt,s}_{i}$, but not both at the same time. Hence, we
compute the finish time of the task once assuming that the restart
occurs before the start time \ie $\mathcal{O}^{pt,f}_{i} = 0$, and
another time assuming it occurs after the start time \ie
$\mathcal{O}^{pt,s}_{i} = 0$. Finish time in these two cases is
referred respectively by $F^s_{i,k}$~(restart before the start time)
and $F^f_{i,k}$~(restart after the start time).

We expand the response time analysis of tasks with preemption
thresholds from~\cite{wang1999scheduling}, considering the overhead of
restarting. In the following, $S_{i,k}$ and $F_{i,k}$ represent the
worst case start time and finish time of job
$\tau_{i,k}$. And, the arrival time of $\tau_{i,k}$ is $(k-1)T_i$. The
worst-case response time of task $\tau_i$ is given by:
\begin{equation}
R_i = \max_{k\in[1,K_i]} \bigg\{\max\{F^s_{i,k},F^f_{i,k}\} - (k-1)T_i\bigg\}
\end{equation}
Here, $K_i$ can be obtained from Equation~\ref{eq:busyperiod} by using $max(\mathcal{O}^{pt,f}_{i},\mathcal{O}^{pt,s}_{i})$ instead of $\mathcal{O}^{np}_{i}$. A task
$\tau_i$ can be blocked only by lower priority tasks that cannot be
preempted by it, that is:
\begin{equation}
B_i = \max_j\{C_j \mid \pi_j < \pi_i \leq \lambda_j\}
\end{equation}
To compute finish time, $S_{i,k}$ is computed iteratively using the
following equation~\cite{wang1999scheduling}:
\begin{equation}
\label{eq:starttime}
S_{i,k}^{(q)} = B_i + (k-1)C_i +\\ \sum_{j\in hp(\pi_i)}\bigg(1+ \bigg\lfloor \frac{S_{i,k}^{(q-1)}}{T_j} \bigg\rfloor \bigg)C_j + \mathcal{O}^{pt,s}_{i}
\end{equation}
Once the job starts executing, only the tasks with higher priority
than $\lambda_i$ can preempt it. Hence, the $F_{i,k}$ can be derived
from the following:
\begin{multline}
\label{eq:finishtime}
F_{i,k}^{(q)} = S_{i,k} + C_i + \\ \sum_{j \in hp(\lambda_i)} \left(
  \bigg\lceil \frac{F^{(q-1)}_{i,k}}{T_j} \bigg\rceil - \left( 1+
    \bigg\lfloor \frac{S_{i,k}}{T_j} \bigg\rfloor \right) \right)C_j +
\mathcal{O}^{pt,f}_{i}
\end{multline}

Task set $\mathcal{T}$ is considered RBR-Feasible if $\forall \tau_i
\in \mathcal{T}, R_i \leq T_i$.

RBR-Feasibility of a task set depends on the choice of $\lambda_i$s
for the tasks. In this paper, we use a genetic algorithm to find a set
of preemption thresholds to achieve RBR-Feasibility of the
task-set. Although this algorithm can be further improved to find the
optimal threshold assignments, the proposed genetic algorithm achieves
acceptable performance, as we show in Section~\ref{sec:evaluation}.

%
%

%
\section{Evaluation}
\label{sec:evaluation}

In this section, we compare and evaluate the four fault-tolerant
scheduling strategies discussed in this paper. In order to evaluate
the practical feasibility of our approach, we have also performed a
preliminary proof-of-concept implementation on commercial hardware~(i.MX7D platform) for an actual 3 degree-of-freedom helicopter. We tested logical faults, application faults and system-level faults and demonstrated that the physical system remained within the admissible region. 
Due to space
constraints, we omit the description and evaluation of our
implementation and refer to~\cite{fault-tolerant-reset} for additional
details.

\begin{figure}[h!]
	\centering 
	\subfigure[Fully preemptive]{\label{fig::OR11}\includegraphics[width=0.241\textwidth]{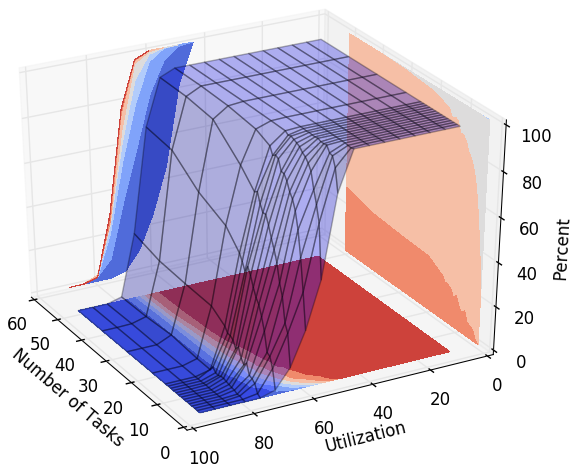}}
	\subfigure[Fully non-preemptive]{\label{fig::OR12}\includegraphics[width=0.241\textwidth]{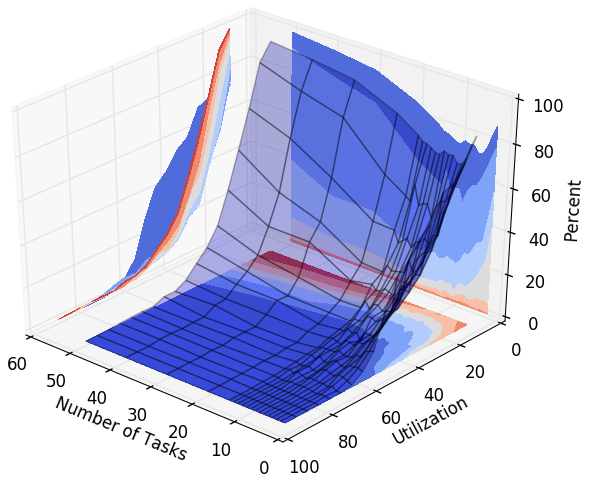}}
	\subfigure[Non-preemptive ending intervals.]{\label{fig::OR13}\includegraphics[width=0.241\textwidth]{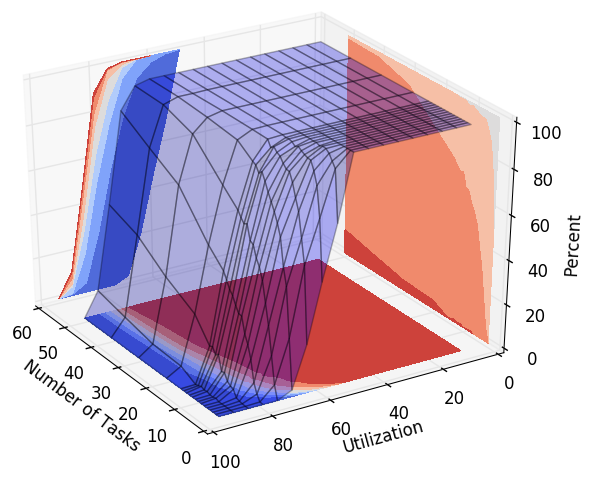}}
	\subfigure[Preemption thresholds.]{\label{fig::OR14}\includegraphics[width=0.241\textwidth]{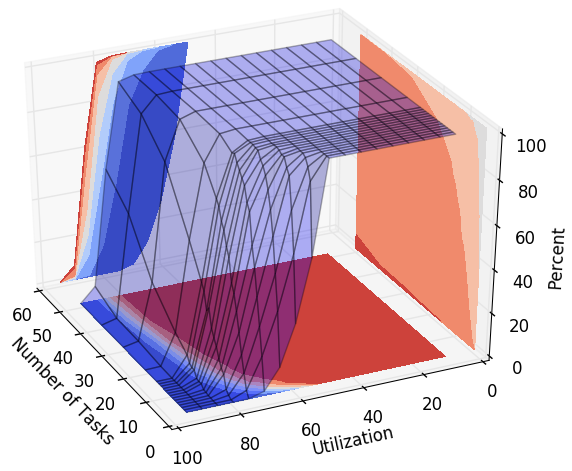}}	
	\caption{Minimum Period: 10, Maximum Period: 1000}
	\label{fig:exp90-1000}
\end{figure}

\begin{figure}[h!]
	\centering 
	\subfigure[Fully preemptive]{\label{fig::OR21}\includegraphics[width=0.241\textwidth]{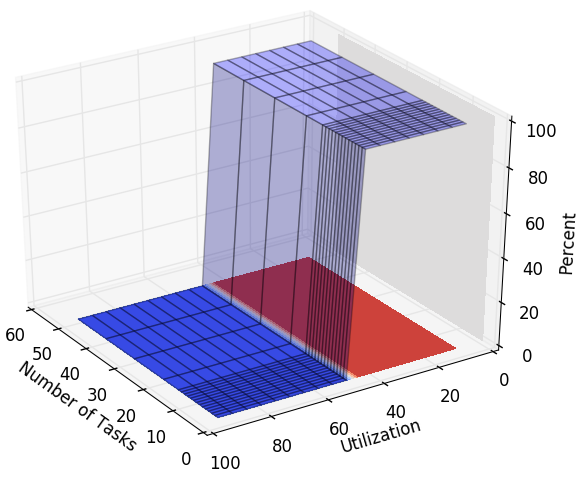}}
	\subfigure[Fully non-preemptive]{\label{fig::OR22}\includegraphics[width=0.241\textwidth]{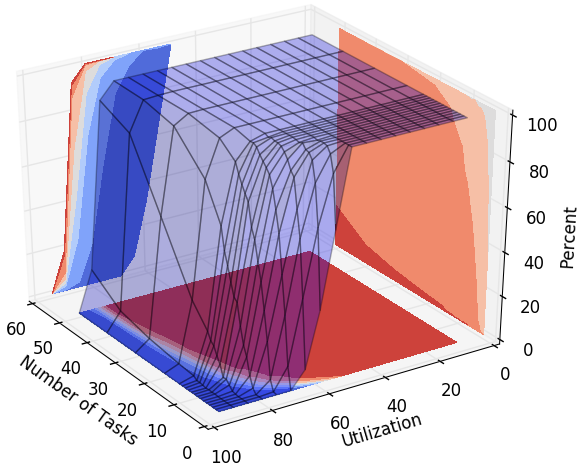}}
	\subfigure[Non-preemptive ending intervals.]{\label{fig::OR23}\includegraphics[width=0.241\textwidth]{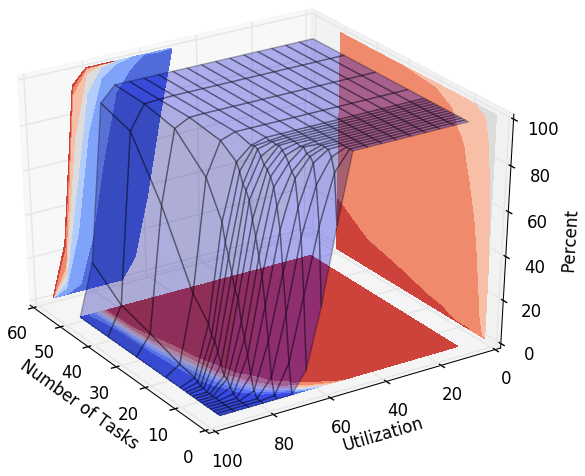}}
	\subfigure[Preemption thresholds.]{\label{fig::OR24}\includegraphics[width=0.241\textwidth]{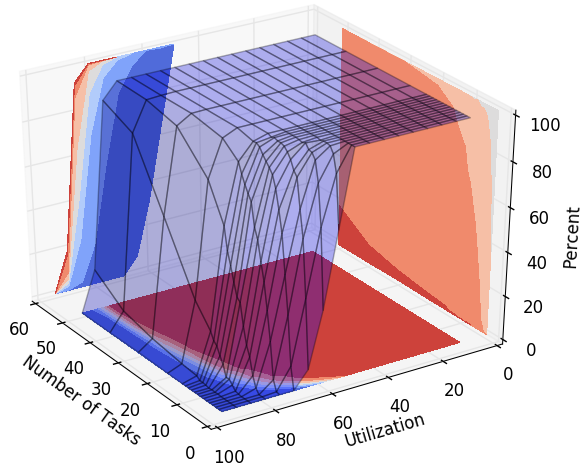}}	
	\caption{Minimum Period: 900, Maximum Period: 1000}
	\label{fig:exp900-1000}
\end{figure}

\subsection{Evaluating Performance of Scheduling Schemes}

In this section, we evaluate the performance of four fault-tolerant
scheduling schemes that are discussed in this paper. For each data
point in the experiments, 500 task sets with the specified utilization
and number of tasks are generated. Then, RBR-feasibility of the task
sets are evaluated under four discussed schemes; fully preemptive,
fully non-preemptive, non-preemptive ending intervals, and preemption
thresholds. In order to evaluate performance of the scheduling
schemes, all the tasks in the analysis are assumed to be part of the
critical workload. Priorities of the tasks are assigned according to
the periods, so a task with shorter period has a higher priority.

The experiments are performed with two sets of parameters for the
periods of the task sets. In the first set of
experiments~(Figure~\ref{fig:exp90-1000}), task sets are generated
with periods in the range of 10 to 1000 time units. In the second
set~(Figure~\ref{fig:exp900-1000}), tasks have a period in the range
of 900 to 1000 time units. As a result, tasks in the first experiment
have more diverse set of periods than the second one.

As shown in Figure~\ref{fig::OR11} and~\ref{fig::OR21}, all the task
sets with utilization less than $50\%$ are RBR-feasible under
preemptive scheduling. This observation is consistent with the results
of~\cite{729793} which considers preemptive task sets under rate
monotonic scheduling with a recovery strategy similar to
ours~(re-executing all the unfinished tasks), and shows that all the
task sets with utilization under $50\%$ are schedulable.

Moreover, a comparison between Figure~\ref{fig::OR11}
and~\ref{fig::OR21} reveals that fully preemptive setting performs
better when tasks in the task set have diverse rates. To understand
this effect, we must notice that the longest preemption chain for a
task in preemptive setting, consists of the execution time of all the
tasks with a higher priority. Therefore, under this scheduling
strategy, tasks with low priority are the bottleneck for
RBR-feasibility analysis. When the diversity of the periods is
increased, lower priority tasks, on average, have much longer
periods. As a result, they have a larger slack to tolerate the
overhead of restarts compared to the lower priority tasks in task sets
with less diverse periods. Hence, more task sets are RBR-feasible when
a larger range of periods is considered.

On the contrary, when tasks have more diverse periods, non-preemptive
setting performs worse~(Figure~\ref{fig::OR12} and
\ref{fig::OR22}). This is because, with diverse periods, tasks with
shorter periods~(and higher priorities) experience longer blocking
times due to low priority tasks with long execution times.

As the figures show, scheduling with preemption thresholds and
non-preemptive intervals in both experiments yield better performance
than preemptive and non-preemptive schemes. This effect is expected
because the flexibility of these schemes allows them to decrease the
overhead of restarts by increasing the non-preemptive regions, or by
increasing the preemption thresholds while maintaining the feasibility
of the task sets. Tasks under these disciplines exhibit less blocking
and lower restart overhead.

Preemption thresholds and non-preemptive endings in general
demonstrate comparable performance. However, in task sets with very
small number of tasks~(2-10 task), scheduling using non-preemptive
ending intervals performs slightly better than preemption
thresholds. This is due to the fact that, with small number of tasks,
the granularity of the latter approach is limited because few choices
can be made on the tasks' preemption thresholds. Whereas, the length
of non-preemptive intervals can be selected with a finer granularity
and is not impacted by the number of tasks.

\section{Related Work}
\label{sec:related}



Most of the previous work on Simplex
Architecture~\cite{sha1998dependable, seto1999engineering, Sha01usingsimplicity, sha1996evolving,crenshaw2007simplex} has
focused on design of the switching logic of DM or the SC, assuming
that the underlying RTOS, libraries and middle-ware will correctly
execute the SC and DM. Often however, these underlying software layers
are unverified and may contain bugs. Unfortunately, Simplex-based
systems are not guaranteed to behave correctly in presence of
system-level faults.
System-Level Simplex and its variants~\cite{bak2009system,
  fardin2016reset, mohan2013s3a} run SC and DM as bare-metal
applications on an isolated, dedicated hardware unit. By doing so, the
critical components are protected from the faults in the OS or
middle-ware of the complex subsystem. However, exercising this design
on most multi-core platforms is challenging. The majority of
commercial multi-core platforms are not designed to achieve strong
inter-core fault isolation due to the high-degree of hardware resource
sharing. For instance, a fault occurring in a core with the highest
privilege level may compromise power and clock configuration of the
entire platform. To achieve full isolation and independence, one has
to utilize two separate boards/systems. Our design enables the system to safely tolerate and recover from
application-level and system-level faults that cause silent failures
in SC and DM {\bf without utilizing additional hardware}.



The notion of restarting as a means of recovery from faults and
improving system availability was previously studied in the
literature. Most of the previous work, however, target traditional
\textit{non}-safety-critical computing systems such as servers and
switches. Authors in~\cite{candea2001recursive} introduce recursively
restartable systems as a design paradigm for highly available systems. Earlier
literature~\cite{Candea03crash-onlysoftware, 
  candea2004microreboot} illustrates the concept of micro-reboot which
consists of having fine-grain rebootable components and trying to
restart them from the smallest component to the biggest one in the
presence of faults. The works in~\cite{vaidyanathan2005comprehensive,
  garg1995analysis, huang1995software} focus on failure and fault
modeling and try to find an optimal rejuvenation strategy for various non safety-critical
systems.

In the context of safety-critical CPS, authors in~\cite{abdi2017application} propose the procedures to design a base controller that enables the entire computing system to be safely restarted at run-time. Base Controller keeps the system inside a subset of safety region by updating the actuator input at least once after every system restart.
In~\cite{fardin2016reset}, which is variation of System-Level Simplex, authors propose that the
complex subsystem can be restarted upon the occurrence of
faults. In this design, safe restarting is possible because the back up controller runs on a
dedicated processing unit and is not impacted by the restarts in the
complex subsystem.


One way to achieve fault-tolerance in real-time systems is to use time
redundancy. Using time redundancy, whenever a fault leads to an error,
and the error is detected, the faulty task is either re-executed or a
different logic (recovery block) is executed to recover from the
error. It is necessary that such recovery strategy does not cause any
deadline misses in the task set. Fault tolerant scheduling has been
extensively studied in the literature. Hereby we briefly survey those
works that are more closely related. 
A feasibility check algorithm
under multiple faults, assuming EDF scheduling for aperiodic preemptive tasks is proposed in~\cite{869322}. 
An exact schedulability tests using checkpointing
for task sets under fully preemptive model and transient fault that
affects one task is proposed in~\cite{Punnekkat2001}. This analysis is
further extended in~\cite{Lima2005} for the case of multiple faults as
well as for the case where the priority of a critical task’s recovery
block is increased.
In \cite{5591653}, authors propose the exact feasibility test for
fixed-priority scheduling of a periodic task set to tolerate multiple
transient faults on uniprocessor. In~\cite{1183950} an approach is
presented to schedule under fixed priority-driven preemptive
scheduling at least one of the two versions of the task; simple
version with reliable timing or complex version with potentially
faulty.
Authors in~\cite{729793} consider a similar fault model to ours, where
the recovery action is to re-execute all the partially executed tasks
at the instant of the fault detection \ie executing task and all the
preempted tasks. This work only considers preemptive task sets under
rate monotonic and shows that single faults with a minimum
inter-arrival time of largest period in the task set can be recovered
if the processor utilization is less than or equal to $50\%$.
In \cite{6925991}, the authors investigate the feasibility of task
sets under fault bursts with preemptive scheduling. Similar to our work, the recovery action is
to re-execute the faulty job along with all the partially completed
(preempted) jobs at the time of fault detection. 
Most of these works are only applicable to transient faults~(\eg
faults that occur due to radiation or short-lived HW malfunctions)
that impact the task and do not consider faults affecting the
underlying system. Additionally, most of these works assume that an online
fault detection or acceptance test mechanism exists. While this assumption is
valid for detecting transient faults or timing faults, detecting complex
system-level faults or logical faults is non-trivial. 
Additionally, to the best of our knowledge, our paper is the first one
to provide the sufficient feasibility condition in the presence of
faults under the \emph{preemption threshold} model and task sets with
\emph{non-preemptive ending intervals}.

\section{Conclusion}
\label{sec:conclusion-future}

Restarting is considered a reliable way to recover traditional
computing systems from complex software faults. However, restarting
safety-critical CPS is challenging. In this work we propose a
restart-based fault-tolerance approach and analyze feasibility
conditions under various schedulability schemes. We analyze the
performance of these strategies for various task sets. 
This approach enables us to provide
formal safety guarantees in the presence of software faults in the
application-layer as well as system-layer faults utilizing only one
commercial off-the-shelf processor.


%
%



\bibliographystyle{IEEEtran}
{\footnotesize
\bibliography{fardin}  
}

%

\end{document}